\theoremstyle{plain}
\newtheorem{proposition}{Proposition}
\newtheorem{remark}{Remark}
\newcommand*{\LargerCdot}{\raisebox{-0.25ex}{\scalebox{2}{$\cdot$}}}
\begin{document}
%
% paper title
% can use linebreaks \\ within to get better formatting as desired
\title{OTFS Signaling for Uplink NOMA of Heterogeneous Mobility Users}
%
%
% author names and IEEE memberships
% note positions of commas and nonbreaking spaces ( ~ ) LaTeX will not break
% a structure at a ~ so this keeps an author's name from being broken across
% two lines.
% use \thanks{} to gain access to the first footnote area
% a separate \thanks must be used for each paragraph as LaTeX2e's \thanks
% was not built to handle multiple paragraphs
%

\author{Yao~Ge,~\IEEEmembership{Student Member,~IEEE,}
        Qinwen~Deng,~\IEEEmembership{Student Member,~IEEE,}\\
        P.~C.~Ching,~\IEEEmembership{Fellow,~IEEE,}
        and~Zhi~Ding,~\IEEEmembership{Fellow,~IEEE}% <-this % stops a space
\thanks{Y. Ge and P. C. Ching are with the Department
of Electronic Engineering, The Chinese University of Hong Kong, Hong Kong SAR of China (e-mail: yaoge.gy.jay@hotmail.com; pcching@ee.cuhk.edu.hk).}% <-this % stops a space
\thanks{Q. Deng and Z. Ding are with the Department of Electrical and Computer Engineering, University of California at Davis, Davis, CA 95616 USA (e-mail: mrdeng@ucdavis.edu; zding@ucdavis.edu).}}% <-this % stops a space
%\thanks{TCOM version based on Michael Shell's bare{\textunderscore}jrnl.tex version 1.3.}}

% note the % following the last \IEEEmembership and also \thanks -
% these prevent an unwanted space from occurring between the last author name
% and the end of the author line. i.e., if you had this:
%
% \author{....lastname \thanks{...} \thanks{...} }
%                     ^------------^------------^----Do not want these spaces!
%
% a space would be appended to the last name and could cause every name on that
% line to be shifted left slightly. This is one of those "LaTeX things". For
% instance, "\textbf{A} \textbf{B}" will typeset as "A B" not "AB". To get
% "AB" then you have to do: "\textbf{A}\textbf{B}"
% \thanks is no different in this regard, so shield the last } of each \thanks
% that ends a line with a % and do not let a space in before the next \thanks.
% Spaces after \IEEEmembership other than the last one are OK (and needed) as
% you are supposed to have spaces between the names. For what it is worth,
% this is a minor point as most people would not even notice if the said evil
% space somehow managed to creep in.

% The paper headers
\markboth{}%
{}
% The only time the second header will appear is for the odd numbered pages
% after the title page when using the twoside option.
%
% *** Note that you probably will NOT want to include the author's ***
% *** name in the headers of peer review papers.                   ***
% You can use \ifCLASSOPTIONpeerreview for conditional compilation here if
% you desire.

% If you want to put a publisher's ID mark on the page you can do it like
% this:
%\IEEEpubid{0000--0000/00\$00.00~\copyright~2007 IEEE}
% Remember, if you use this you must call \IEEEpubidadjcol in the second
% column for its text to clear the IEEEpubid mark.

% use for special paper notices
%\IEEEspecialpapernotice{(Invited Paper)}

% make the title area
\maketitle

\begin{abstract}
%\boldmath
We investigate a coded uplink non-orthogonal multiple access (NOMA)
configuration in which groups of co-channel users 
are modulated in accordance with orthogonal time frequency space (OTFS).
We take advantage of OTFS characteristics to achieve
NOMA spectrum sharing in the 
delay-Doppler domain between 
stationary and mobile users.
We develop an efficient iterative 
turbo receiver based on the principle of successive interference 
cancellation (SIC) to overcome the co-channel interference (CCI).
We propose two turbo detector algorithms: 
orthogonal approximate message passing with 
linear minimum mean squared error (OAMP-LMMSE) and 
Gaussian approximate message passing with expectation propagation (GAMP-EP).
The interactive OAMP-LMMSE detector and GAMP-EP detector
are respectively assigned for the reception of the stationary and mobile users.
We analyze the convergence performance of our proposed iterative 
SIC turbo receiver by utilizing a customized  
extrinsic information transfer (EXIT) chart and simplify the corresponding detector algorithms to further reduce receiver complexity. 
Our proposed iterative SIC turbo receiver demonstrates 
performance improvement over existing receivers and 
robustness against imperfect SIC process and channel state information uncertainty.
\end{abstract}

% Note that keywords are not normally used for peerreview papers.
\begin{IEEEkeywords}
Co-channel interference, NOMA, OTFS, Resource allocation, Turbo receiver.
\end{IEEEkeywords}

% For peer review papers, you can put extra information on the cover
% page as needed:
% \ifCLASSOPTIONpeerreview
% \begin{center} \bfseries EDICS Category: 3-BBND \end{center}
% \fi
%
% For peerreview papers, this IEEEtran command inserts a page break and
% creates the second title. It will be ignored for other modes.
\IEEEpeerreviewmaketitle

\section{Introduction}
% The very first letter is a 2 line initial drop letter followed
% by the rest of the first word in caps.
%
% form to use if the first word consists of a single letter:
% \IEEEPARstart{A}{demo} file is ....
%
% form to use if you need the single drop letter followed by
% normal text (unknown if ever used by IEEE):
% \IEEEPARstart{A}{}demo file is ....
%
% Some journals put the first two words in caps:
% \IEEEPARstart{T}{his demo} file is ....
%
% Here we have the typical use of a "T" for an initial drop letter
% and "HIS" in caps to complete the first word.
% \IEEEPARstart{T}{his} demo file

The rapid advancement of wireless transmission and mobile communication 
techniques has led to an explosive rise in data traffics of wireless networks.
In order to support such tremendous needs, 
high spectrum efficiency techniques such as
non-orthogonal multiple access (NOMA) 
have been considered as promising solutions
for improving spectrum utilization and user connectivity \cite{ding2017survey,ding2017application,dai2015non}. 
Unlike conventional orthogonal multiple access (OMA), 
NOMA allows multiple users to access the same spectrum
simultaneously at different power levels \cite{islam2016power} 
or with the help of low-density spreading codes \cite{sharma2019joint}. 
To overcome the inevitable co-channel interference (CCI),
advanced receivers such as successive interference cancellation (SIC) 
are required as effective multi-user detection for NOMA systems. 
Note that the NOMA spectrum sharing is common among users with different
channel conditions or quality of service (QoS) levels \cite{ding2017application,islam2016power}.

Broadband mobile communications in high-mobility environments 
such as high-speed railways and autonomous vehicles represent another
strong arena of growth. High mobility communications are particularly
challenging because of the well-known fast channel fading 
and distortions due to the large Doppler spread. 
Recently, the advent of orthogonal time frequency space 
(OTFS) \cite{hadani2017orthogonal} modulation shows strong promise
as an effective PHY-layer alternative to traditional orthogonal frequency division multiplexing (OFDM) in high-mobility environments.
OTFS exhibits performance advantages since it can exploit the diversity gain coming from both the delay and Doppler channel domains. 
OTFS can effectively convert a rapidly time-varying channel in time-frequency domain
into a quasi-stationary channel model in delay-Doppler domain. 
This quasi-stationary channel model simplifies channel estimation \cite{shen2019channel,raviteja2019embedded,9110823} and symbol detection \cite{surabhi2019low,murali2018otfs,tiwari2019low,raviteja2018interference,yuan2020simple} for wireless receivers in high-mobility scenarios. 
OTFS achieves diversity gain for stationary multipath 
channels and doubly-selective channels, as shown in \cite{raviteja2019otfs} 
and \cite{surabhi2019diversity,raviteja2019effective}, respectively. 
Other related works considered OTFS modulation 
in multiple-input multiple-output (MIMO) systems
\cite{ramachandran2018mimo}
and millimeter wave (mmWave) 
communication systems \cite{surabhi2019otfs}.

Recognizing the superior performance of OTFS in delay-Doppler channel domain, 
several works studied multiple user access based on
OTFS framework in high-mobility scenarios \cite{khammammetti2018otfs,augustine2019interleaved,surabhi2019multiple,li2020new,chatterjee2020non,deka2020otfs}. 
For OTFS-OMA, the authors in \cite{khammammetti2018otfs,augustine2019interleaved} proposed to allocate different time-frequency resources to different users. 
This orthogonal resource allocation can be either in contiguous \cite{khammammetti2018otfs} or interleaved \cite{augustine2019interleaved} fashions when the ideal bi-orthogonal pulses are available. Practically, however, such ideal pulses are not realizable in view of
the Heisenberg uncertainty principle \cite{matz2013time}. 
Allocation of different delay-Doppler resources to different users 
\cite{surabhi2019multiple} has shown significant CCI that requires complex receivers.
For massive MIMO-OTFS networks, a new path division multiple access (PDMA)
\cite{li2020new} can assign angle-domain resources 
to different users so as to eliminate CCI at the receiver 
in the angle-delay-Doppler domain. 
To further improve the spectral efficiency and support the massive connectivity, the OTFS-NOMA schemes were proposed in \cite{chatterjee2020non,deka2020otfs}, where multiple mobile users are sharing the same delay-Doppler resources and distinguished by either different power levels \cite{chatterjee2020non} or sparse codewords \cite{deka2020otfs}.

A recent work \cite{ding2019otfs,ding2019robust}
suggested a new form of OTFS-NOMA in which a single high-mobility user using 
OTFS in the delay-Doppler domain is paired with a group of
low-mobility OFDM users for non-orthogonal channel sharing. 
The improved spectrum efficiency shown by \cite{ding2019otfs,ding2019robust},
however, relies on the ideal bi-orthogonal OTFS pulses have been elusive
to practitioners and may not even exist physically. 
These ideal bi-orthogonal OTFS pulses are also essential to the proposed simple equalizations in \cite{ding2019otfs,ding2019robust}
for OTFS-NOMA system. In addition, the performance analysis of OTFS-NOMA scheme in \cite{ding2019otfs,ding2019robust} only considers the perfect SIC process and requires mobile channels to exhibit
on-the-grid delays and Doppler shifts, which are still unrealistic assumptions in practical OTFS-NOMA system deployment.

To alleviate the dependency on the assumptions of ideal bi-orthogonal OTFS pulses and
on-the-grid channel delays and/or Doppler shifts, 
we investigate a more general coded uplink OTFS-NOMA scenario in this work. 
Without loss of generality, we focus on a simple scenario
in which
the stationary users and mobile users are grouped for NOMA. 
Unlike \cite{ding2019otfs,ding2019robust}, both users utilize OTFS 
modulation. 
We design an efficient NOMA protocol by grouping users with 
different mobility profiles. With the use of
OTFS, co-channel users of different mobility profiles
can take advantage of different resource allocations in the delay-Doppler domain
to mitigate their CCI and simplify the receiver complexity. 
We eliminate the unrealistic assumptions of ideal bi-orthogonal
pulses, perfect SIC process 
and on-the-grid channel delay/Doppler shifts. 
By using the practical pulses such as rectangular pulses,
the block circulant matrices in \cite{ding2019otfs,ding2019robust} 
no longer apply. To this end, we develop a novel
receiver architecture to 
effectively mitigate CCI and recover the signal for each user.
Our contributions in this work are summarized as follows:
\begin{enumerate}
\item We propose an OTFS-based NOMA (OBNOMA) configuration that groups users of 
different mobility profiles by only utilizing OTFS
modulation. Without loss of generality, we consider groups
of stationary and mobile users that occupy different
sub-vector resources of delay-Doppler domain via OTFS. 
This OBNOMA framework can effectively tackle the CCI and
is amenable to effective receiver algorithms. 

\item We design an iterative turbo receiver for multi-user 
detection and decoding that leverages the SIC principle. 
The proposed joint SIC detector and individual user decoders
exchange the extrinsic information iteratively 
to improve receiver performance.  
In particular, we develop an orthogonal approximate message passing
with linear minimum mean squared error (OAMP-LMMSE) algorithm for detecting
the OBNOMA stationary users' signal and 
a Gaussian approximate message passing
with expectation propagation (GAMP-EP) algorithm for detecting 
the signal of OBNOMA mobile users. 

\item We develop a novel customized extrinsic information transfer (EXIT) chart framework to analyze the convergence property of our proposed iterative SIC turbo receiver. More importantly, we further propose 
reduced complexity variants for both
OAMP-LMMSE and GAMP-EP detectors without significant performance drop.

\item We demonstrate that our proposed
iterative SIC turbo receiver for OBNOMA system 
outperforms the existing methods and has robustness to the imperfect SIC process and channel state information (CSI) uncertainty. 
\end{enumerate}

We organize the remaining sections of 
the paper as follows. Section
\ref{II_OTFS} summarizes the basics of OTFS transmission. 
Section \ref{III_Model} proposes the novel coded uplink OBNOMA system model
and describes the resource allocation in delay-Doppler domain 
for OBNOMA users. In Section \ref{IV_Receiver}, we propose
our iterative SIC turbo receiver and the two component
detectors respectively for signal detection of stationary 
and mobile users. We further analyze the convergence behavior
of the proposed iterative SIC turbo receiver through a novel customized EXIT chart
and simplify the two
detector algorithms to reduce receiver complexity in Section \ref{V_Reduced}. 
We provide simulation results in Section \ref{VI_Simulation} 
under different scenarios. Our conclusions are finally drawn in Section \ref{VII_Conclusion}. The Appendix contains some detailed proofs at the end of the paper.

%Such spectrum sharing among the users with
%different mobility profiles can be particularly important to
%5G and beyond communication scenarios, where some users
%might be static, e.g., Internet of Things (IoT) sensors, and there might be some users which are moving at very high
%speeds, e.g., users in a high-speed train.
%
%Such spectrum sharing is particularly important if
%the high-mobility user has weak channel conditions or needs
%to be served with a small data rate only.

\section{Preliminaries}\label{II_OTFS}
In this section, we briefly introduce basic OTFS concepts and system transmission model. We also
provide the mathematical descriptions of OTFS used for both mobile and stationary
users.

\subsection{Basic Concepts of OTFS}
Unlike conventional OFDM, OTFS multiplexes and processes each information symbol in the roughly constant delay-Doppler domain rather than time-frequency domain. A lattice in 
time-frequency plane is sampled by intervals
$T$ (seconds) and
$\Delta f{\rm{ = }}{1 \mathord{\left/
 {\vphantom {1 T}} \right.
 \kern-\nulldelimiterspace} T}$
(Hz) along the time and frequency axes, i.e., $$\Lambda  = \left\{ {(m\Delta f,nT),m = 0, \cdots ,M - 1;n = 0, \cdots ,N - 1} \right\},
$$
where $M \in {\cal Z}$ and $N \in {\cal Z}$ represent the total available numbers of subcarriers and time intervals, respectively. According to the channel characteristics, $T$ and ${\Delta f}$ are chosen, respectively, larger than the maximal channel delay spread and maximum Doppler frequency shift. 

The corresponding lattice in delay-Doppler plane is described by $$\Gamma  = \left\{ {\left(\frac{\ell}{{M\Delta f}},\frac{k}{{NT}}\right),\ell = 0, \cdots ,M - 1;k = 0, \cdots ,N - 1} \right\},$$
where ${1 \mathord{\left/
{\vphantom {1 {M\Delta f}}} \right.
 \kern-\nulldelimiterspace} {M\Delta f}}$ and ${1 \mathord{\left/
 {\vphantom {1 {NT}}} \right.
 \kern-\nulldelimiterspace} {NT}}$ denote the resolutions of delay dimension and Doppler dimension, respectively. Note that signals placed on delay-Doppler grids in a given packet burst can be transformed into time-frequency samples with time duration ${T_f} = NT$ and bandwidth $B = M\Delta f$. Additional details on OTFS can be found in existing works such as \cite{hadani2017orthogonal,raviteja2018interference}.

\subsection{OTFS Signal Model for Mobile Users}
At OTFS transmitter, the $MN$ random information symbols (e.g., QAM) are generated from a complex alphabet $\mathbb{A} = \left\{ {{a_1},{a_2}, \cdots ,{a_Q}} \right\}$ and placed on the delay-Doppler plane $\Gamma $. These delay-Doppler symbols ${\bf{X}} \in {\mathbb{C}^{M \times N}}$ are mapped into a lattice in time-frequency domain ${{\bf{\bar X}}}\in {\mathbb{C}^{M \times N}}$ through the inverse symplectic finite Fourier transform (ISFFT) \cite{raviteja2018practical}, 
\begin{align}\label{ISFFT}
{\bf{\bar X}} = {{\bf{F}}_M}{\bf{XF}}_N^H,
\end{align}
where ${{\bf{F}}_M} \in {\mathbb{C}^{M \times M}}$ and ${{\bf{F}}_N} \in {\mathbb{C}^{N \times N}}$ are the normalized $M$-point and $N$-point discrete Fourier transform (DFT) matrices, respectively.
Next, the Heisenberg transform is adopted to the time-frequency signal ${{\bf{\bar X}}}$ with a transmit pulse ${g_{tx}}(t)$ to generate the time domain signal ${\bf{s}} \in {\mathbb{C}^{MN \times 1}}$,
\begin{align}
s[c] = \sum\limits_{n = 0}^{N - 1} {\sum\limits_{m = 0}^{M - 1} {{\bar X}[m,n]{g_{tx}}(c{T_s} - nT){e^{j2\pi m \Delta f(c{T_s} - nT)}}} },\; c = 0, \cdots, MN - 1,
\end{align}
where ${T_s} = {1 \mathord{\left/
 {\vphantom {1 {M\Delta f}}} \right.
 \kern-\nulldelimiterspace} {M\Delta f}}$ is the symbol spaced sampling interval.
 
To overcome the inter-frame interference, we append a cyclic prefix (CP) of length no shorter than 
the maximal channel delay spread to signal ${\bf{s}}$. After passing a transmit filter, the resulted time domain signal enters the multipath fading channels characterized by sampled response of
\begin{align}\label{channel_M}
h[c,p] = \sum\limits_{i = 1}^L {{h_i}{e^{j2\pi {\nu _i}\left( {c{T_s} - p{T_s}} \right)}}{{\mathop{\rm P}\nolimits} _\text{{rc}}}(p{T_s} - {\tau _i})}, \; c = 0, \cdots, MN - 1;\; p = 0, \cdots ,P - 1,
\end{align}
where $L$ denotes the number of multipaths;
${{h_i}}$, ${{\tau _i}}$ and ${{\nu _i}}$ are
the complex gain, delay and Doppler shift associated with the $i$-th path, respectively. The channel tap $P$ is determined by the maximal channel delay spread as well as the duration of the overall filter response.

In (\ref{channel_M}),
${{{\mathop{\rm P}\nolimits} _\text{{rc}}}(p{T_s} - {\tau _i})}$ is the sampled overall
filter response that
comprises a pair of bandlimiting matched filters adopted by the transmitter and receiver to control signal transmission bandwidth and to achieve maximum
signal-to-noise ratio (SNR) at the receiver.
In practice, the most common
implemented pulse shaping filters at the transmitter and receiver are the root raised-cosine (RRC) filters, leading to a raised-cosine (RC) rolloff
pulse for
${{{\mathop{\rm P}\nolimits} _\text{{rc}}}(\tau )}$. 
In addition, the Doppler frequency shift of the $i$-th path can be written as ${\nu _i} = ({{k_{{\nu _i}}} + {\beta _{{\nu _i}}}})/NT$,
where the integer ${{k_{{\nu _i}}}}$ and real
${\beta _{{\nu _i}}} \in \left( { -0.5,0.5} \right]$ are respectively represent the index and fractional Doppler shift of ${\nu _i}$.

Consider the baseband model. At OTFS receiver, the channel output 
signal enters a user-defined receive filter. After removing CP, we can obtain
the received signal ${\bf{r}} \in {\mathbb{C}^{MN \times 1}}$ as
\begin{align}
r[c] = \sum\limits_{p = 0}^{P - 1} {h[c,p]s\left[ {{{\left[ {c - p} \right]}_{MN}}} \right]}  + n[c],\; c = 0, \cdots ,MN - 1,
\end{align}
where $\bf{n}$ represents the filtered noise and 
the notation ${\left[  \LargerCdot  \right]_m}$ denotes mod-$m$ operation.
The received time domain signal ${\bf{r}}$ is then processed by Wigner transform (i.e., the inverse of Heisenberg transform) using a receive pulse ${g_{rx}}(t)$ to produce the time-frequency domain signal
\begin{align}
\bar Y[m,n] = \sum\limits_{c = 0}^{MN - 1} {g_{rx}^*(c{T_s} - nT)r[c]{e^{ - j2\pi m\Delta f(c{T_s} - nT)}}}, \; {m = 0, \cdots ,M - 1};\; {n = 0, \cdots ,N - 1}.
\end{align}
Finally, the signal matrix ${\bf{\bar Y}} \in {\mathbb{C}^{M \times N}}$ in the time-frequency domain are transformed back to the delay-Doppler domain via symplectic finite Fourier transform (SFFT) as described below \cite{raviteja2018practical}:
\begin{align}
{\bf{Y}} = {\bf{F}}_M^H{\bf{\bar Y}}{{\bf{F}}_N}.
\end{align}

For simplicity, we utilize a rectangular pulse 
for ${g_{tx}}(t)$ and ${g_{rx}}(t)$ in the above steps, for which
the baseband OTFS input-output relationship
in delay-Doppler domain is given by \cite{ge2020}
\begin{align}\label{relate_Mobility}
Y[\ell,k] = \sum\limits_{p = 0}^{P - 1} {\sum\limits_{i = 1}^L {\sum\limits_{q = 0}^{N - 1} {{h_i}{{\mathop{\rm P}\nolimits} _\text{{rc}}}(p{T_s} - {\tau _i})\gamma (k,\ell,p,q,{k_{{\nu _i}}},{\beta _{{\nu _i}}})X\left[ {{{\left[ {\ell - p} \right]}_M},{{\left[ {k - {k_{{\nu _i}}} + q} \right]}_N}} \right]} } }  + \omega [\ell,k],
\end{align}
%\begin{subequations}\label{relate_Mobility}
%\begin{align}
%Y[\ell,k] &= \sum\limits_{p = 0}^{P - 1} {\sum\limits_{i = 1}^L {\sum\limits_{q = 0}^{N - 1} {{h_i}{{\mathop{\rm P}\nolimits} _\text{{rc}}}(p{T_s} - {\tau _i})\gamma (k,\ell,p,q,{k_{{\nu _i}}},{\beta _{{\nu _i}}})X\left[ {{{\left[ {\ell - p} \right]}_M},{{\left[ {k - {k_{{\nu _i}}} + q} \right]}_N}} \right]} } }  + \omega [\ell,k]\\
%&\approx  \sum\limits_{p = 0}^{P - 1} {\sum\limits_{i = 1}^L {\sum\limits_{q = - {E_i}}^{{E_i}} {{h_i}{{\mathop{\rm P}\nolimits} _\text{{rc}}}(p{T_s} - {\tau _i})\gamma (k,\ell,p,q,{k_{{\nu _i}}},{\beta _{{\nu _i}}})X\left[ {{{\left[ {\ell - p} \right]}_M},{{\left[ {k - {k_{{\nu _i}}} + q} \right]}_N}} \right]} } }  + \omega [\ell,k],
%\end{align}
%\end{subequations}
where ${{\bm{\omega }}} \in {\mathbb{C}^{M \times N}} $ is the noise at the output of the SFFT. We also use the following definitions:
\begin{subequations}
\begin{equation}
\gamma (k,\ell,p,q,{k_{{\nu _i}}},{\beta _{{\nu _i}}}) =
\begin{cases}
\frac{1}{N}\xi (\ell,p,{k_{{\nu _i}}},{\beta _{{\nu _i}}})\theta (q,{\beta _{{\nu _i}}}),&p \le \ell < M,\\
\frac{1}{N}\xi (\ell,p,{k_{{\nu _i}}},{\beta _{{\nu _i}}})\theta (q,{\beta _{{\nu _i}}})\phi (k,q,{k_{{\nu _i}}}), &0 \le \ell < p,
\end{cases}
\end{equation}
\begin{align}
\xi (\ell,p,{k_{{\nu _i}}},{\beta _{{\nu _i}}}) = {e^{j2\pi \left( {\frac{{\ell - p}}{M}} \right)\left( {\frac{{{k_{{\nu _i}}} + {\beta _{{\nu _i}}}}}{N}} \right)}},
\end{align}
\begin{align}
\theta (q,{\beta _{{\nu _i}}}) = \frac{{{e^{ - j2\pi ( - q - {\beta _{{\nu _i}}})}} - 1}}{{{e^{ - j\frac{{2\pi }}{N}( - q - {\beta _{{\nu _i}}})}} - 1}},
\;
\phi (k,q,{k_{{\nu _i}}}) = {e^{ - j2\pi \frac{{{{\left[ {k - {k_{{\nu _i}}} + q} \right]}_N}}}{N}}}.
\end{align}
\end{subequations}

To summarize,
the input-output relationship in (\ref{relate_Mobility}) can be
vectorized column-wise into
\begin{align}\label{input_output_M}
{{{\bf{\tilde y}}}_{\cal M}} = {{{\bf{\tilde H}}}_{\cal M}}{{{\bf{\tilde x}}}_{\cal M}} + {{\bm{\omega }}_\mathcal{M}},
\end{align}
where ${{{\bf{\tilde x}}}_{\cal M}},{{{\bf{\tilde y}}}_{\cal M}},{{\bm{\omega }}_\mathcal{M}} \in {\mathbb{C}^{MN \times 1}} $ and ${{{\bf{\tilde H}}}_{\cal M}}\in {\mathbb{C}^{MN \times MN}}$ is a sparse matrix.

\subsection{OTFS Signal Model for Stationary Users}
Since the stationary user do not experience Doppler shifts (i.e., ${\nu _i} = 0,\forall i$), the baseband channel impulse response in (\ref{channel_M}) can be simplified into
\begin{align}
h[p] = \sum\limits_{i = 1}^L {{h_i}{{\mathop{\rm P}\nolimits} _\text{{rc}}}(p{T_s} - {\tau _i})}, \;  p = 0, \cdots ,P - 1.
\end{align}
In this case, the channel input-output relationship of OTFS in (\ref{relate_Mobility}) reduces to
\begin{align}\label{relate_Static}
Y[\ell,k] = \sum\limits_{p = 0}^{P - 1}  {h[p]\bar \gamma (k,\ell ,p)X\left[ {{{\left[ {\ell - p} \right]}_M},k} \right]}  + \omega [\ell,k],
\end{align}
where
\begin{equation}\label{phase_static}
\bar \gamma (k,\ell ,p) =
\begin{cases}
1,&p \le \ell < M,\\
{e^{ - j2\pi \frac{k}{N}}}, &0 \le \ell < p.
\end{cases}
\end{equation}

Let us denote the frequency domain channel response
\begin{align}
H[c] = \sum\limits_{p = 0}^{P - 1} {{h}[p]{e^{ - j\frac{{2\pi cp}}{MN}}}} ,\;c = 0, \cdots ,MN - 1,
\end{align}
and define the diagonal matrix ${{{\bf{\bar H}}}_k} \in {\mathbb{C}^{M \times M}}$ via
\begin{align}\label{H_diag}
{{{\bf{\bar H}}}_k} = \text{diag}\left\{ {H[k],H[k + N], \cdots ,H[k + (M - 1)N]} \right\}.
\end{align}
The following proposition summarizes the findings:
\begin{proposition}\label{OTFS_VOFDM}
For stationary user, the OTFS input-output relationship in (\ref{relate_Static}) is equivalent to
\begin{align}\label{relate_Static_equvalient}
{{\bf{y}}_k} = {{\bf{H}}_k}{{\bf{x}}_k} + {{\bm{\omega }}_k},\;k = 0, \cdots ,N - 1,
\end{align}
where ${{\bf{x}}_k}\in {\mathbb{C}^{M \times 1}}$ and ${{\bf{y}}_k}\in {\mathbb{C}^{M \times 1}}$ are the $k$-th column of ${\bf{X}}$ and ${\bf{Y}}$, respectively. ${{\bm{\omega }}_k} \in {\mathbb{C}^{M \times 1}}$ is the $k$-th noise vector and the equivalent channel matrix ${{\bf{H}}_k}\in {\mathbb{C}^{M \times M}}$ is given by
\begin{align}\label{Static_H}
{{\bf{H}}_k} = {\bf{U}}_k^H{{{\bf{\bar H}}}_k}{{\bf{U}}_k},
\end{align}
where ${{\bf{U}}_k} = {{\bf{F}}_M}{{\bf{\Lambda }}_k}$ is a unitary matrix with
\begin{align}
{{\bf{\Lambda }}_k} = \text{diag}\left\{ {1,{e^{ - j\frac{{2\pi k}}{MN}}},{e^{ - j\frac{{2\pi 2k}}{MN}}}, \cdots ,{e^{ - j\frac{{2\pi (M - 1)k}}{MN}}}} \right\}.
\end{align}
\end{proposition}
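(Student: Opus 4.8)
The plan is to exploit the fact that the scalar relation (\ref{relate_Static}) decouples completely across the Doppler index $k$: the output sample $Y[\ell,k]$ depends only on the inputs $X[\,\cdot\,,k]$ sitting in the \emph{same} column, because both the summand $X[[\ell-p]_M,k]$ and the phase factor $\bar\gamma(k,\ell,p)$ carry the index $k$ unchanged. Hence, unlike the mobile case whose channel matrix ${\bf{\tilde H}}_{\cal M}$ is a full $MN\times MN$ sparse matrix, here the system block-diagonalizes into $N$ independent $M\times M$ systems, one per column. Setting $({\bf{x}}_k)_m = X[m,k]$ and $({\bf{y}}_k)_\ell = Y[\ell,k]$, relation (\ref{relate_Static}) is already of the form ${\bf{y}}_k = {\bf{H}}_k{\bf{x}}_k + {\bm{\omega}}_k$, where, substituting $m=[\ell-p]_M$, the $(\ell,m)$ entry of ${\bf{H}}_k$ equals $h[p]\,\bar\gamma(k,\ell,p)$ for the tap $p=[\ell-m]_M\in\{0,\dots,P-1\}$ (unique when $P\le M$) and vanishes otherwise. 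The task then reduces to checking that the factorization ${\bf{U}}_k^H{\bf{\bar H}}_k{\bf{U}}_k$ reproduces exactly these entries.

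First I would observe that ${\bf{U}}_k={\bf{F}}_M{\bf{\Lambda}}_k$ is unitary, since ${\bf{F}}_M$ is unitary and ${\bf{\Lambda}}_k$ is diagonal with unit-modulus entries. Combining $({\bf{F}}_M)_{a,b}=M^{-1/2}e^{-j2\pi ab/M}$ with $({\bf{\Lambda}}_k)_{b,b}=e^{-j2\pi bk/MN}$ gives the compact form $({\bf{U}}_k)_{a,b}=M^{-1/2}e^{-j2\pi b(aN+k)/MN}$, so that
\begin{align}
({\bf{H}}_k)_{\ell,m}=\frac{1}{M}\sum_{a=0}^{M-1}H[aN+k]\,e^{j2\pi(\ell-m)(aN+k)/MN}. \nonumber
\end{align}
I would then insert $H[aN+k]=\sum_{p=0}^{P-1}h[p]e^{-j2\pi(aN+k)p/MN}$, interchange the two finite sums, and split the exponent as $e^{j2\pi a(\ell-m-p)/M}e^{j2\pi k(\ell-m-p)/MN}$, isolating the inner geometric sum $\sum_{a=0}^{M-1}e^{j2\pi a(\ell-m-p)/M}$ over $a$.

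The crux is the orthogonality of this inner sum, which equals $M$ when $\ell-m-p\equiv 0\pmod M$ and $0$ otherwise; this selects the single tap $m=[\ell-p]_M$ and collapses the double sum to one surviving term. The main (and only delicate) obstacle is then to verify that the residual phase $e^{j2\pi k(\ell-m-p)/MN}$ reproduces the two-branch definition (\ref{phase_static}) of $\bar\gamma$. I would split on the sign of $\ell-p$: when $p\le\ell$ one has $[\ell-p]_M=\ell-p$, hence $\ell-m-p=0$ and the phase is $1$; when $p>\ell$ one has $[\ell-p]_M=\ell-p+M$, hence $\ell-m-p=-M$ and the phase reduces to $e^{-j2\pi k/N}$. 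These coincide precisely with the cases $p\le\ell<M$ and $0\le\ell<p$, so $({\bf{H}}_k)_{\ell,m}=h[p]\bar\gamma(k,\ell,p)$ with $p=[\ell-m]_M$, matching the entries read off from (\ref{relate_Static}) and thereby establishing the equivalence.
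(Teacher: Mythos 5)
Your proposal is correct and takes essentially the same route as the paper's own proof: both expand the entries of ${\bf{U}}_k^H{{\bf{\bar H}}}_k{{\bf{U}}_k}$ using the DFT structure, insert $H[aN+k]=\sum_{p}h[p]e^{-j2\pi (aN+k)p/MN}$, interchange sums and invoke the geometric-sum orthogonality $\sum_{a=0}^{M-1}e^{j2\pi a(\ell-m-p)/M}=M\,\delta([\ell-m-p]_M)$ to select $m=[\ell-p]_M$, and resolve the residual phase $e^{j2\pi k(\ell-m-p)/MN}$ into the two branches of $\bar\gamma(k,\ell,p)$ by the same mod-$M$ case analysis. The only cosmetic difference is direction: you read off the target entries of ${\bf{H}}_k$ from (\ref{relate_Static}) and verify that the factorization reproduces them, whereas the paper computes the factorization's entries first and then substitutes them into ${{\bf{y}}_k}={{\bf{H}}_k}{{\bf{x}}_k}+{{\bm{\omega }}_k}$ to recover (\ref{relate_Static}).
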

\begin{proof}
See Appendix.
\end{proof}

The input-output relationship in (\ref{relate_Static_equvalient}) can be rewritten as
\begin{align}\label{input_output_S}
{{{\bf{\tilde y}}}_{\cal S}} = {{{\bf{\tilde H}}}_{\cal S}}{{{\bf{\tilde x}}}_{\cal S}} + {{\bm{\omega }}_\mathcal{S}},
\end{align}
where ${{{\bf{\tilde x}}}_{\cal S}} = {\left[ {{\bf{x}}_0^T,{\bf{x}}_1^T, \cdots ,{\bf{x}}_{N - 1}^T} \right]^T}$, ${{{\bf{\tilde y}}}_{\cal S}} = {\left[ {{\bf{y}}_0^T,{\bf{y}}_1^T, \cdots ,{\bf{y}}_{N - 1}^T} \right]^T}$, ${{\bm{\omega }}_\mathcal{S}} = {\left[ {{\bm{\omega }}_0^T,{\bm{\omega }}_1^T, \cdots ,{\bm{\omega }}_{N - 1}^T} \right]^T}$ and ${{{\bf{\tilde H}}}_{\cal S}} = \text{diag}\left\{ {{{\bf{H}}_0},{{\bf{H}}_1}, \cdots ,{{\bf{H}}_{N - 1}}} \right\}$.

\vspace{-0.3cm}
\begin{remark}
Note that the mathematical description of OTFS transmission with rectangular pulses for stationary user in \textbf{Proposition \ref{OTFS_VOFDM}} is equivalent to the Vector OFDM \cite{zhang2007asymmetric,xia2001precoded,li2012performance}, which is proposed to as a bridge between conventional OFDM and single carrier modulations. We also notice that similar result has been mentioned in \cite{raviteja2019otfs} according to the equivalent system structures of OTFS and Vector OFDM for stationary user. Such intrinsic equivalence between OTFS and Vector OFDM offers a
new insight for better understanding of Vector OFDM. They share the same characteristics and properties directly.
\end{remark}

\vspace{-0.7cm}
\begin{remark}\label{Remark_property}
From OTFS signal models, we observe that the received signal is only affected by a quasi-stationary channel in the delay-Doppler domain as shown in (\ref{relate_Mobility}) for the mobile users. For the stationary users, OTFS converts an inter-symbol interference (ISI) channel into multiple ``ISI-free'' vector channels as in (\ref{relate_Static_equvalient}). Evidently, OTFS helps simplify the system models for both the
mobile and stationary users.
\end{remark}

\vspace{-0.9cm}

\section{OBNOMA System Model and Resource Allocation}\label{III_Model}

\subsection{Mobility-Profile based User Grouping in OBNOMA}
\begin{figure}%[bth]
  \centering
  \includegraphics[width=6.5in]{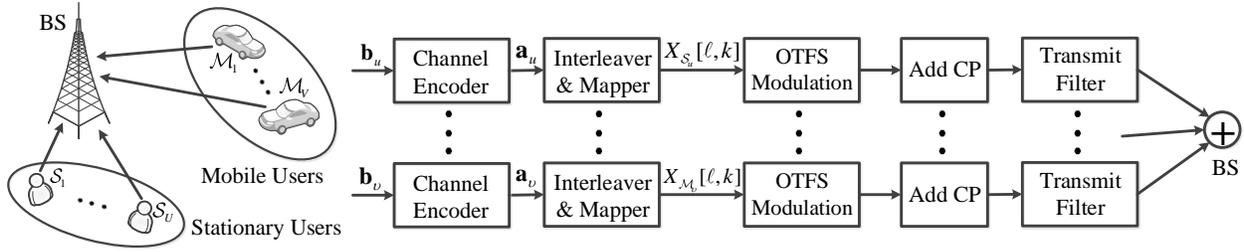}
  \caption{System model for coded uplink OBNOMA scheme.}\label{OTFS_system}
\end{figure}
Consider a coded uplink multiuser system with $(U+V)$ users communicating with a base station (BS) simultaneously as shown in Fig. \ref{OTFS_system}, where $\mathcal{U} = \left\{ {1,2, \cdots ,U} \right\}$ denotes the set of stationary users and $\mathcal{V} = \left\{ {U + 1,U + 2, \cdots ,U + V} \right\}$ represents the set of mobile users. For simplicity, we use $u = \left\{ {1,2, \cdots ,U} \right\}\in \mathcal{U} $ and $v = \left\{ {1,2, \cdots , V} \right\}\in \mathcal{V} $ to denote the $u$-th stationary user and $v$-th mobile user, respectively. To avoid unnecessary confusion, we use a simple model 
in which the user terminals and the BS receiver are equipped with a single 
transmit antenna and receive antenna. Each user utilizes OTFS for uplink transmission to take advantage of its benefits (\textbf{Remark \ref{Remark_property}}). 
Naturally, our model also applies to the cases involving multiple transmit
and receive antennas, with expected diversity gain. 

In practice, the stationary and mobile users experience different
Doppler shifts and fading rates. This channel difference 
allows us to develop a special OTFS-based NOMA (OBNOMA)
by grouping users with different mobility profiles.
Note that laissez faire resource allocation may lead to 
severe CCI. Considering characteristics of stationary and mobile users in terms of Doppler shifts and channel
delay spreads,  we propose a novel resource allocation in OBNOMA scheme 
as shown in Fig. \ref{resource_allocate}. Specifically, the non-overlapping bins along the Doppler axis are assigned to stationary users orthogonally without
mutual interference among these stationary users. On the other hand,
disjoint and contiguous bins along the delay axis 
are allocated to mobile users to mitigate their mutual interference.
\begin{figure}%[bth]
  \centering
  \includegraphics[width=4in]{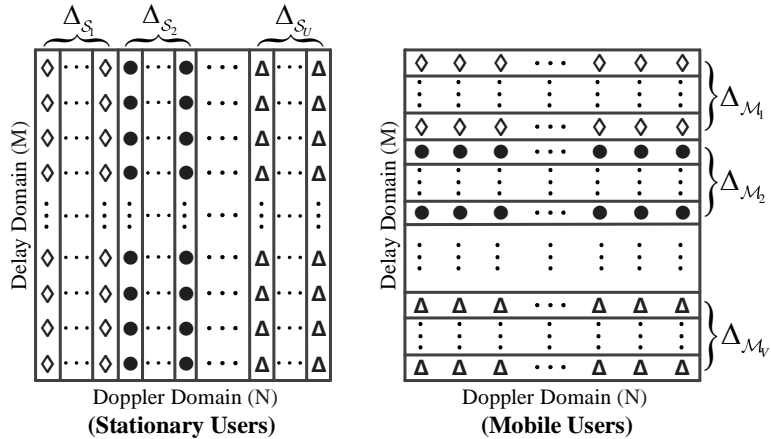}
  \caption{Resource allocation in OBNOMA.}\label{resource_allocate}
\end{figure}

\subsection{Proposed OBNOMA Signal Models}
Consider $K_g$ binary information bits ${{\bf{b}}_g}$ for user $g \in \left\{ {\mathcal{U},\mathcal{V}} \right\}$ which are encoded into a codeword ${{\bf{a}}_g}$ of length $N_g$, where the code rate equals to $K_g/N_g$. The codeword ${{\bf{a}}_g}$ is interleaved to give a data block ${{\bf{d}}_g}$ before being modulated into Gray-mapped symbols ${{\bf{e}}_g}$ drawn from
a complex alphabet $\mathbb{A}$. Thereby, the resulting transmit symbols for the $u$-th stationary user and $v$-th mobile user are placed in the delay-Doppler plane $\Gamma$, and denoted respectively as
\begin{subequations}
\begin{align}\label{symbols_static}
{X_{{\cal S}_u}}[\ell,k] =
\begin{cases}
{{\bf{e}}_u}[\ell + {k_u}N],&\ell \in \{ 0,1, \cdots ,M - 1\} \;\&\; k \in {\Delta _{{\mathcal{S}_u}}},{k_u} = \left\{ {0,1, \cdots ,\left| {{\Delta _{{{\cal S}_u}}}} \right| - 1} \right\}\\
0, &\text{otherwise},
\end{cases}
\end{align}
\begin{align}\label{symbols_mobility}
{X_{{\cal M}_v}}[\ell,k] =
\begin{cases}
{{\bf{e}}_v}[{\ell_v}M + k],&\ell \in {\Delta _{{\mathcal{M}_v}}} \;\&\; k \in \{ 0,1, \cdots ,N - 1\},{\ell_v} = \left\{ {0,1, \cdots ,\left| {{\Delta _{{{\cal M}_v}}}} \right| - 1} \right\}\\
0, &\text{otherwise},
\end{cases}
\end{align}
\end{subequations}
where ${\Delta _{{\mathcal{S}_u}}}$ is the set of bins along the Doppler axis assigned to the $u$-th stationary user of cardinality $\left| {{\Delta _{{{\cal S}_u}}}} \right|$ and ${\Delta _{{\mathcal{M}_v}}}$ is the set of bins along the delay axis assigned to the $v$-th mobile user of cardinality $\left| {{\Delta _{{{\cal M}_v}}}} \right|$, respectively.
Without loss of generality, the mean symbol energies are denoted as ${P_{\cal S}}$ and ${P_{\cal M}}$ for stationary and mobile users, respectively, with the corresponding bit energies ${E_{\cal S}}$ and ${E_{\cal M}}$. In addition, the sets assigned to different users with the same mobility profile are mutually independent, i.e., $ \cup _{u = 1}^U{\Delta _{{\mathcal{S}_u}}} = \{ 0,1, \cdots ,N - 1\} $, $ \cup _{v = 1}^V{\Delta _{{\mathcal{M}_v}}} = \{ 0,1, \cdots ,M - 1\} $, ${\Delta _{{\mathcal{S}_u}}} \cap {\Delta _{{\mathcal{S}_{u'}}}} = \emptyset $ and ${\Delta _{{\mathcal{M}_v}}} \cap {\Delta _{{\mathcal{M}_{v'}}}} = \emptyset$ when $u \ne u'$ and $v \ne v'$, respectively.

Each user employs  OTFS modulation (i.e., ISFFT and Heisenberg transform) 
and adds a CP in front of the generated time domain signal. 
After passing through the transmit filter,  each stationary
user signal is sent out over 
channel response
\begin{align}
{h_{{\mathcal{S}_u}}}\left[ p \right] = \sum\limits_{i = 1}^{L_{{\mathcal{S}_u}}} {{h_{{{\mathcal{S}_u}},i}}{{\mathop{\rm P}\nolimits} _\text{{rc}}}(p{T_s} -{t_{{\mathcal{S}_u}}} - {\tau _{{{\mathcal{S}_u}},i}})}, \;  p = 0, \cdots ,P_{\mathcal{S}_u} - 1,
\end{align}
whereas each mobile user signal is transmitted over channel response
\begin{align}
{h_{{\mathcal{M}_v}}}\left[ {c,p} \right] = \sum\limits_{i = 1}^{L_{{\mathcal{M}_v}}} {{h_{{{\mathcal{M}_v}},i}}{e^{j2\pi {\nu _{{{\mathcal{M}_v}},i}}\left( {c{T_s} - p{T_s}} \right)}}{{\mathop{\rm P}\nolimits} _\text{{rc}}}(p{T_s} - {t_{{\mathcal{M}_v}}}-{\tau _{{{\mathcal{M}_v}},i}})}, \;  \begin{array}{l}
 c = 0, \cdots, MN - 1,
\\p = 0, \cdots ,P_{\mathcal{M}_v} - 1.
\end{array}
\end{align}
Recall that $L_{{\mathcal{S}_u}}$ and ${t_{{\mathcal{S}_u}}}$ are the number of multipaths and the amount of timing offset experienced by the $u$-th stationary user;
$h_{{{\mathcal{S}_u}},i}$ and $\tau _{{{\mathcal{S}_u}},i}$ represent
the gain and delay associated with the $i$-th path of the $u$-th stationary user's channel. Similarly, $L_{{\mathcal{M}_v}}$ and ${t_{{\mathcal{M}_v}}}$ denote the number of multipaths and the timing offset experienced by the $v$-th mobile user; $h_{{{\mathcal{M}_v}},i}$, $\tau _{{{\mathcal{M}_v}},i}$ and $\nu _{{{\mathcal{M}_v}},i}$ stand for the complex gain, delay and Doppler frequency shift associated with the $i$-th path of the $v$-th mobile user's channel, respectively. $P_{\mathcal{S}_u}$ and $P_{\mathcal{M}_v}$ represent, respectively, the channel taps of the $u$-th stationary user and $v$-th mobile user.

We assume that the CP is sufficiently long to accommodate both the maximum timing offset and the maximal channel delay spread. Hence, there is no inter-frame interference. At the receiver, the CP is removed after the received filter. 
We apply the standard Wigner transform and SFFT structure to demodulate OTFS
signals in the delay-Doppler domain, where the input-output relationship can be expressed as
\begin{subequations}\label{input_output_overall}
\begin{align}
{\bf{y}} &= \sum\limits_{u = 1}^U {{{{\bf{\tilde H}}}_{{\mathcal{S}_u}}}{{{\bf{\tilde x}}}_{{\mathcal{S}_u}}}}  + \sum\limits_{v = 1}^V {{{{\bf{\tilde H}}}_{{\mathcal{M}_v}}}{{{\bf{\tilde x}}}_{{\mathcal{M}_v}}}}  + {\bm{\omega }}\\
&= {{{\bf{\bar H}}}_\mathcal{S}}{{{\bf{\bar x}}}_\mathcal{S}} + {{{\bf{\bar H}}}_\mathcal{M}}{{{\bf{\bar x}}}_\mathcal{M}} + {\bm{\omega }},
\end{align}
\end{subequations}
Here, we have used the following notations 
\begin{eqnarray*} &{{{\bf{\bar H}}}_\mathcal{S}} = \left[ {{{{\bf{\tilde H}}}_{{\mathcal{S}_1}}},{{{\bf{\tilde H}}}_{{\mathcal{S}_2}}}, \cdots ,{{{\bf{\tilde H}}}_{{\mathcal{S}_U}}}} \right] \in {\mathbb{C}^{MN \times UMN}},
&  {{{\bf{\bar x}}}_\mathcal{S}} = {\left[ {{\bf{\tilde x}}_{{\mathcal{S}_1}}^T,{\bf{\tilde x}}_{{\mathcal{S}_2}}^T, \cdots ,{\bf{\tilde x}}_{{\mathcal{S}_U}}^T} \right]^T} \in {\mathbb{C}^{UMN \times 1}},
\\
&{{{\bf{\bar H}}}_\mathcal{M}}  = \left[ {{{{\bf{\tilde H}}}_{{\mathcal{M}_1}}},{{{\bf{\tilde H}}}_{{\mathcal{M}_2}}}, \cdots ,{{{\bf{\tilde H}}}_{{\mathcal{M}_V}}}} \right] \in {\mathbb{C}^{MN \times VMN}},
& {{{\bf{\bar x}}}_\mathcal{M}} = {\left[ {{\bf{\tilde x}}_{{\mathcal{M}_1}}^T,{\bf{\tilde x}}_{{\mathcal{M}_2}}^T, \cdots ,{\bf{\tilde x}}_{{\mathcal{M}_V}}^T} \right]^T} \in {\mathbb{C}^{VMN \times 1}}.
\end{eqnarray*}
Also, ${\bf{y}} \in {\mathbb{C}^{MN \times 1}}$ represents the received signal and ${\bm{\omega }} \in {\mathbb{C}^{MN \times 1}} \sim \mathcal{CN}\left( {{\bf{0}},{{\bm{\Sigma }}_{\bm{\omega }}}} \right)$ denotes the noise vector. ${{{\bf{\tilde x}}}_{{\mathcal{S}_u}}} \in {\mathbb{C}^{MN \times 1}}$ and ${{{\bf{\tilde x}}}_{{\mathcal{M}_v}}} \in {\mathbb{C}^{MN \times 1}}$ contain the transmitted symbols from $u$-th stationary user and $v$-th mobile user, respectively. The equivalent channels ${{{{\bf{\tilde H}}}_{{\mathcal{S}_u}}}}$ and ${{{{\bf{\tilde H}}}_{{\mathcal{M}_v}}}}$ have the similar structures as ${{{\bf{\tilde H}}}_{\cal S}}$ in (\ref{input_output_S}) and ${{{\bf{\tilde H}}}_{\cal M}}$ in (\ref{input_output_M}), respectively.

Note that ${{{\bf{\bar x}}}_\mathcal{S}}$ and ${{{\bf{\bar x}}}_\mathcal{M}}$ are sparse vectors due to the resource allocations of (\ref{symbols_static}) and (\ref{symbols_mobility}). The numbers of non-zero elements in ${{{\bf{\bar x}}}_\mathcal{S}}$ and ${{{\bf{\bar x}}}_\mathcal{M}}$ are identically $MN$. Let ${{\bf{x}}_\mathcal{S}} \in {\mathbb{C}^{MN \times 1}}$ and ${{\bf{x}}_\mathcal{M}} \in {\mathbb{C}^{MN \times 1}}$ denote the effective input vectors after removing the zeros in ${{{\bf{\bar x}}}_\mathcal{S}}$ and ${{{\bf{\bar x}}}_\mathcal{M}}$; Let ${{\bf{H}}_\mathcal{S}} \in {\mathbb{C}^{MN \times MN}}$ and ${{\bf{H}}_\mathcal{M}} \in {\mathbb{C}^{MN \times MN}}$ represent the effective matrices after deleting the columns corresponding to the indices of the zeros in ${{{\bf{\bar x}}}_\mathcal{S}}$ and ${{{\bf{\bar x}}}_\mathcal{M}}$, respectively. We can then simplify the relationship of (\ref{input_output_overall}) to
\begin{align}\label{input_output_simplify}
{\bf{y}} = {{{\bf{ H}}}_\mathcal{S}}{{{\bf{ x}}}_\mathcal{S}} + {{{\bf{ H}}}_\mathcal{M}}{{{\bf{ x}}}_\mathcal{M}} + {\bm{\omega }}.
\end{align}

From (\ref{input_output_simplify}), we observe that the conventional single user detection \cite{tiwari2019low,raviteja2018interference,yuan2020simple} or multi-user detection with OMA \cite{augustine2019interleaved,surabhi2019multiple} cannot be directly applied to recover the signal due to the strong presence of CCI at the receiver. Therefore, additional processing such as iterative SIC techniques should be employed to mitigate CCI effect.

\section{Iterative SIC Turbo Receiver for OBNOMA}\label{IV_Receiver}
We now investigate the recovery of the signal for each user from the received aggregated signal at the BS. Here, we propose an iterative SIC turbo receiver to overcome the CCI and self-interference in delay-Doppler domain.

\subsection{Receiver Structure}
\begin{figure}%[bth]
  \centering
  \includegraphics[width=6.2in]{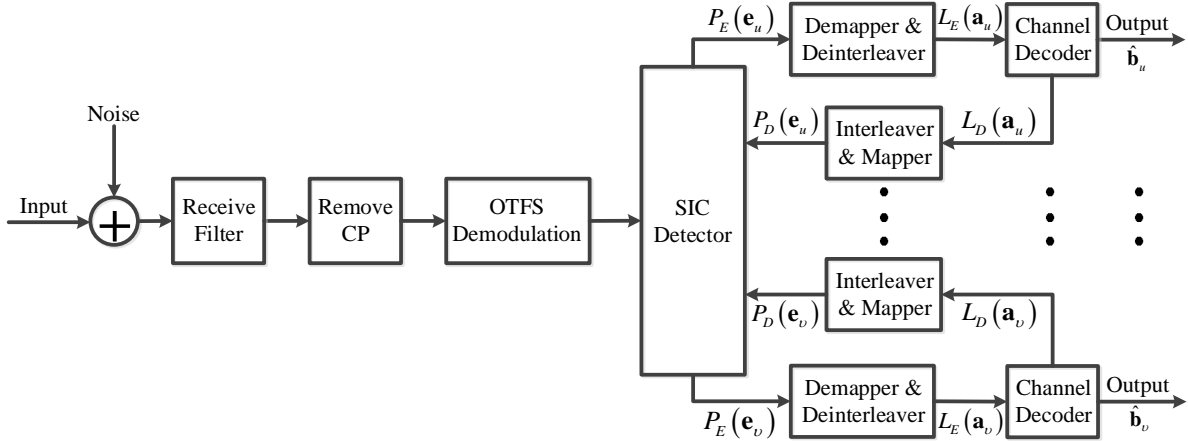}
  \caption{Iterative SIC turbo receiver structure.}\label{receiver_structure}
\end{figure}
The structure of iterative SIC turbo receiver is shown in Fig. \ref{receiver_structure}. The key point is the iterative exchange of information between the SIC detector and the co-channel individual user channel decoders. The extrinsic information generated from the SIC detector is treated as {\em a priori} information by the individual channel decoders. Next, the channel decoders generate extrinsic information to be used by the SIC detector as its {\em a priori} information to form a soft-input-soft-output turbo processing loop. 
Note that the concept of turbo equalization has been extensively studied for stationary communication systems \cite{tuchler2011turbo} such as single-user \cite{douillard1995} or multi-user scenarios \cite{wang1999iterative,li2005exit}, 
which can achieve excellent performance.

Without loss of generality, we
let $\mathcal{R} = {\log _2}Q$ be the number of bits in each symbol.
In this receiver, the SIC detector generates extrinsic probabilities ${P_E}\left( {{{\bf{e}}_g}} \right),\forall g \in \left\{ {{\cal U},{\cal V}} \right\}$ by taking the feedback information ${P_D}\left( {{{\bf{e}}_g}} \right)$ from the channel decoders as the input {\em a priori} probabilities.
To initialize, the SIC detector starts with equiprobable symbols
without prior information from channel decoders. 
These extrinsic probabilities ${P_E}\left( {{{\bf{e}}_g}} \right)$ are then demapped and their bit log-likelihood ratios (LLRs) can be expressed as
\begin{align}\label{demap_LLR}
{L_E}\left( {{{\bf{d}}_g}[c\mathcal{R} + j]} \right) = \log \frac{{\sum\nolimits_{\left. {{{\bf{e}}_g}[c] \in \mathbb{A}} \right|{{\bf{d}}_g}[c\mathcal{R} + j] = 0} {{P_E}\left( {{{\bf{e}}_g}[c]} \right)} }}{{\sum\nolimits_{\left. {{{\bf{e}}_g}[c] \in \mathbb{A}} \right|{{\bf{d}}_g}[c\mathcal{R} + j] = 1} {{P_E}\left( {{{\bf{e}}_g}[c]} \right)} }}, \; c = 0, \cdots ,\frac{{{N_g}}}{\mathcal{R}} - 1,
\end{align}
where ${{{\bf{d}}_g}[c\mathcal{R} + j]}$ denoting the $j$-th bit associated to the $c$-th symbol of the $g$-th user. 
The resulted LLRs are deinterleaved as ${L_E}\left( {{{\bf{a}}_g}} \right)$ before passing to the channel decoders. The channel decoder computes an estimation of the information bits ${{{\bf{\hat b}}}_g}$, along with the extrinsic LLRs on the coded bits ${{{\bf{a}}_g}}$ according to
\begin{align}\label{chan_code}
{L_D}\left( {{{\bf{a}}_g}[i]} \right) = \log \frac{{\Pr \left( {\left. {{{\bf{a}}_g}[i] = 0} \right|{L_E}\left( {{{\bf{a}}_g}} \right)} \right)}}{{\Pr \left( {\left. {{{\bf{a}}_g}[i] = 1} \right|{L_E}\left( {{{\bf{a}}_g}} \right)} \right)}} - {L_E}\left( {{{\bf{a}}_g}[i]} \right), \; i = 1,2, \cdots ,{N_g}.
\end{align}
These extrinsic LLRs are interleaved as ${L_D}\left( {{{\bf{d}}_g}} \right)$ and mapped back to update the input {\em a priori} probabilities of SIC detector
\begin{align}\label{map_LLR}
{P_D}\left( {{{\bf{e}}_g}[c] = \chi } \right) \propto \prod _{j = 1}^\mathcal{R}{e^{ - \varphi _j^{ - 1}(\chi ){L_D}\left( {{{\bf{d}}_g}[c\mathcal{R} + j]} \right)}},
\end{align}
where ${\varphi _j^{ - 1}(\chi )}$ denotes the value of the $j$-th bit labelling the symbol $\chi  \in \mathbb{A}$.

This ``turbo'' message passing process is repeated iteratively before terminating
at a maximum iteration number $n_t$ or upon meeting other preset stopping criteria.
A detailed implementation of the iterative SIC turbo receiver is summarized in \textbf{Algorithm \ref{alg:A}}.
Here the SIC detector is adopted, where we detect stationary users' signal first
before removing their contribution to the aggregated received signal via SIC.
The mobile users' signal are then detected upon the removal of stationary users' signal. 
Based on different system models, we propose two detection algorithms next.
\begin{algorithm}%[H]
\caption{Iterative SIC Turbo Receiver}
\label{alg:A}
\begin{algorithmic}
\STATE {Initialization: ${P_D}\left( {{{\bf{e}}_g}[c] = \chi } \right) = {1 \mathord{\left/
 {\vphantom {1 Q}} \right.
 \kern-\nulldelimiterspace} Q},c = 0, \cdots ,\frac{{{N_g}}}{\mathcal{R}} - 1,\forall g \in \left\{ {{\cal U},{\cal V}} \right\}, \chi  \in \mathbb{A}$.}
\FOR{$\mathcal{T} = 0,1, \cdots ,n_t$}
\STATE {\textbf{SIC detector:}}
\STATE {1)\; Obtain $P\left( {{{\bf{x}}_{{\mathcal{S}}}}} \right)$ and ${P_E}\left( {{{\bf{e}}_u}} \right),\forall u \in {\cal U}$ by employing OAMP-LMMSE detector in \textbf{Algorithm \ref{alg:B}};}
\STATE {2)\; Obtain ${P_E}\left( {{{\bf{e}}_v}} \right),\forall v \in {\cal V}$ by employing GAMP-EP detector in \textbf{Algorithm \ref{alg:C}};}
\STATE {3)\; Demap the extrinsic probabilities ${P_E}\left( {{{\bf{e}}_g}} \right),\forall g \in \left\{ {{\cal U},{\cal V}} \right\}$ and compute their extrinsic bit LLRs ${L_E}\left( {{{\bf{d}}_g}} \right),\forall g \in \left\{ {{\cal U},{\cal V}} \right\}$ in (\ref{demap_LLR});}
\STATE {4)\; Deinterleave ${L_E}\left( {{{\bf{d}}_g}} \right),\forall g \in \left\{ {{\cal U},{\cal V}} \right\}$ and deliver them to the channel decoders;}
\STATE {\textbf{Channel decoders:}}
\STATE {5)\; Run each channel decoder to output ${L_D}\left( {{{\bf{a}}_g}} \right),\forall g \in \left\{ {{\cal U},{\cal V}} \right\}$ in (\ref{chan_code});}
\STATE {6)\; Interleave the extrinsic LLRs ${L_D}\left( {{{\bf{a}}_g}} \right),\forall g \in \left\{ {{\cal U},{\cal V}} \right\}$ and map them to obtain ${P_D}\left( {{{\bf{e}}_g}} \right),\forall g \in \left\{ {{\cal U},{\cal V}} \right\}$ in (\ref{map_LLR});}
\ENDFOR
\STATE {Output: The decisions of the information bits ${{{\bf{\hat b}}}_g},\forall g \in \left\{ {{\cal U},{\cal V}} \right\}$ from the channel decoders.}
\end{algorithmic}
\end{algorithm}

\subsection{OAMP-LMMSE Detector for Stationary Users in OBNOMA}
After obtaining the feedback information ${P_D}\left( {{{\bf{e}}_g}} \right),\forall g \in \left\{ {{\cal U},{\cal V}} \right\}$ from the co-channel individual user channel decoders, we first initialize the {\em a priori} probabilities of ${P_D}\left( {{{\bf{x}}_{\cal S}}} \right)$ and ${P_D}\left( {{{\bf{x}}_{\cal M}}} \right)$ according to the indices of symbols for each user corresponding to ${P_D}\left( {{{\bf{e}}_g}} \right),\forall g \in \left\{ {{\cal U},{\cal V}} \right\}$. We then project each entry of probabilities ${P_D}\left( {{{\bf{x}}_{\cal M}}} \right)$ into Gaussian distribution, with respective mean and variance given by
\begin{align}\label{Gaussi_pro}
{\mu _{{\mathcal{M}_c}}}  = \sum\limits_{\,\chi  \in \mathbb{A}} {\chi {P_D}\left( {{x_{{\mathcal{M}_c}}} = \chi } \right)}, \quad
{\eta _{{\mathcal{M}_c}}} = \sum\limits_{\,\chi  \in \mathbb{A}} {{{\left| \chi  \right|}^2}{P_D}\left( {{x_{{\mathcal{M}_c}}} = \chi } \right)}  - {\left| {{\mu _{{\mathcal{M}_c}}}} \right|^2},
\end{align}
for $c = 0,1, \cdots ,MN - 1$.
Next, we can approximately rewrite (\ref{input_output_simplify}) as
\begin{align}\label{static_form}
{{\bf{y}}_\mathcal{S}}  \simeq  {{\bf{H}}_{\cal S}}{{\bf{x}}_{\cal S}} + {{\bf{z}}_\mathcal{S}},
\end{align}
where ${{\bf{y}}_\mathcal{S}} = {\bf{y}} - {{\bf{H}}_{\cal M}}{{\bm{\mu }}_\mathcal{M}}$ and ${{\bf{z}}_\mathcal{S}}$ is modeled as $\mathcal{CN}\left( {{\bf{0}},{{\bf{\Sigma }}_\mathcal{S}}} \right)$ with covariance matrix ${{\bm{\Sigma }}_\mathcal{S}} = {{\bf{\Sigma }}_{\bm{\omega }}} + {{\bf{H}}_{\cal M}}\text{diag}\left\{ {{{\bm{\eta }}_\mathcal{M}}} \right\}{\bf{H}}_\mathcal{M}^H$.

For $m = 0,\;1,\; \cdots,\;M - 1$, we define the following notations 
\begin{eqnarray*} &{\bf{x}}_{{\cal S}_k} = {\bf{x}}_{{\cal S}}\left[ {kN + m} \right],
\quad  {{\bf{y}}_{{\mathcal{S}_k}}} = {\bf{y}}_{\mathcal{S}}\left[ {kN + m} \right], \quad {{\bf{z}}_{{\mathcal{S}_k}}}  \sim \mathcal{CN}
\left({\bf{0}},{\bf{\Sigma }}_{\mathcal{S}_k}\right), 
\\
& {{\bf{H}}_{{\mathcal{S}_k}}} = {\bf{H}}_{\mathcal{S}}\left[ {kN + m,kN + m} \right],
\quad  {\bf{\Sigma }}_{\mathcal{S}_k} = {\bf{\Sigma }}_{\mathcal{S}}\left[ {kN + m,kN + m} \right].
\end{eqnarray*}
As a result, we can rewrite (\ref{static_form})
as multiple ``ISI-free'' vector channels as in (\ref{relate_Static_equvalient}),
\begin{align}\label{static_form_block}
{{\bf{y}}_{{\mathcal{S}_k}}} \simeq {{\bf{H}}_{{\mathcal{S}_k}}}{{\bf{x}}_{{\mathcal{S}_k}}} + {{\bf{z}}_{{\mathcal{S}_k}}},\;k = 0,1, \cdots ,N - 1.
\end{align}

Through this process, the stationary users' signal can be detected block by block as in (\ref{static_form_block}). One can apply conventional linear receivers such as zero-forcing (ZF) and LMMSE \cite{surabhi2019low,tiwari2019low,li2012performance} or an efficient message passing (MP) algorithm \cite{raviteja2018interference,raviteja2019otfs} for symbol detection.

Recently, an OAMP algorithm is proposed in \cite{ma2019orthogonal}, where the extrinsic messages passed iteratively in the factor graph are only required to be orthogonal rather than stringent independent in the original MP.  
The successful performance improvement of OAMP motivates us to detect stationary users' signal by combining OAMP with LMMSE.

%\begin{figure}%[bth]
%  \centering
%  \includegraphics[width=2.8in]{FG_S.pdf}
%  \caption{Factor graph describing (\ref{post_static}).}\label{FGS}
%\end{figure}

To describe the detail steps of the OAMP-LMMSE receiver, 
we focus on the $k$-th block without loss of generality.
Specifically, we use ${{{x}}_{{\mathcal{S}_{k,m}}}}$ to denote the $m$-th symbol in the $k$-th block and assume its {\em a priori} distribution to be Gaussian, modeled as ${q_D}\left( {{{{x}}_{{\mathcal{S}_{k,m}}}}} \right) \sim \mathcal{CN}\left( {{\mu _{{\mathcal{S}_{k,m}}}},{\eta _{{\mathcal{S}_{k,m}}}}} \right)$. Hence, the joint posteriori distribution can be decomposed as follows:
\begin{align}\label{post_static}
p\left( {{{\bf{x}}_{{\mathcal{S}_k}}}\left| {{{\bf{y}}_{{\mathcal{S}_k}}},{{\bf{H}}_{{\mathcal{S}_k}}}} \right.} \right) \propto p\left( {{{\bf{y}}_{{\mathcal{S}_k}}}\left| {{{\bf{x}}_{{\mathcal{S}_k}}},{{\bf{H}}_{{\mathcal{S}_k}}}} \right.} \right)\prod\limits_{m = 0}^{M - 1} {{q_D}\left( {{{{x}}_{{\mathcal{S}_{k,m}}}}} \right)}  \sim \mathcal{CN}\left( {{{\bf{A}}_{{\mathcal{S}_k}}},{{\bf{B}}_{{\mathcal{S}_k}}}} \right).
\end{align}
Here, we can represent (\ref{post_static}) by using a factor graph, where a factor node ${{\bf{y}}_{{\mathcal{S}_k}}}$ is connected to multiple variable nodes ${x_{{\mathcal{S}_{k,m}}}},m = 0,1, \cdots ,M - 1$. We approximate the {\em a posteriori} distribution by computing and passing messages between the factor node ${{{\bf{y}}_{{\mathcal{S}_k}}}}$ and variable nodes ${x_{{\mathcal{S}_{k,m}}}},m = 0,1, \cdots ,M - 1$ in this factor graph iteratively. The OAMP-LMMSE detector is summarized in \textbf{Algorithm \ref{alg:B}}. We now describe its detailed steps in iteration $\iota $:
\begin{algorithm}%[H]
\caption{OAMP-LMMSE Detector}
\label{alg:B}
\begin{algorithmic}
\STATE {Input: ${P_D}\left( {{{\bf{e}}_g}} \right),\forall g \in \left\{ {{\cal U},{\cal V}} \right\}$, $\bf{y}$, ${{{\bf{ H}}}_\mathcal{S}}$, ${{{\bf{ H}}}_\mathcal{M}}$ and ${n_\mathcal{S}}$.}
\STATE {Initialization: ${P_D}\left( {{{\bf{x}}_{\cal S}}} \right)$, ${P_D}\left( {{{\bf{x}}_{\cal M}}} \right)$, ${{\bm{\mu }}_\mathcal{M}}$, ${{\bm{\eta }}_\mathcal{M}}$, ${{\bf{y}}_\mathcal{S}}$, ${{\bm{\mu }}_{{\mathcal{S}}}^{(0)}}$, ${{\bm{\eta }}_{{\mathcal{S}}}^{(0)}}$ and $\alpha _\mathcal{S}^{(0)}=0$.}
\FOR{$k = 0,1, \cdots ,N - 1$}
\STATE {Set iteration count $\iota=1$.}
\REPEAT
\STATE {1)\; Factor node ${{\bf{y}}_{{\mathcal{S}_k}}}$ generates the extrinsic mean $C_{{\mathcal{S}_{k,m}}}^{(\iota )}$ and variance $D_{{\mathcal{S}_{k,m}}}^{(\iota )}$ in (\ref{extrinsic_ys}), then sends them to the variable nodes ${x_{{\mathcal{S}_{k,m}}}},m = 0,1, \cdots ,M - 1$;}
\STATE {2)\; Each variable node ${x_{{\mathcal{S}_{k,m}}}}$ computes the mean $\mu _{{\mathcal{S}_{k,m}}}^{(\iota )}$ and variance $\eta_{{\mathcal{S}_{k,m}}}^{(\iota )}$ in (\ref{extrinsic_xs}), and passes them back to the factor node ${{\bf{y}}_{{\mathcal{S}_k}}}$;}
\STATE {3)\; Calculate the convergence indicator $\alpha _\mathcal{S}^{(\iota )}$ in (\ref{conv_s});}
\STATE {4)\; Update $P\left( {{{\bf{x}}_{{\mathcal{S}_k}}}} \right) = {{\bar P}^{(\iota )}}\left( {{{\bf{x}}_{{\mathcal{S}_k}}}} \right)$ and ${P_E}\left( {{{\bf{x}}_{{\mathcal{S}_k}}}} \right) = \bar P_E^{(\iota )}\left( {{{\bf{x}}_{{\mathcal{S}_k}}}} \right)$ if $\alpha _\mathcal{S}^{(\iota )}>\alpha _\mathcal{S}^{(\iota -1)}$;}
\STATE {5)\; $\iota: = \iota + 1$;}
\UNTIL{$\alpha _\mathcal{S}^{(\iota )}=1$ or $\iota={n_\mathcal{S}}$.}
\ENDFOR
\STATE {Output: $P\left( {{{\bf{x}}_{{\mathcal{S}}}}} \right)$ and ${P_E}\left( {{{\bf{e}}_u}} \right),\forall u \in {\cal U}$.}
\end{algorithmic}
\end{algorithm}

\textbf{1) From factor node ${{\bf{y}}_{{\mathcal{S}_k}}}$ to variable nodes ${x_{{\mathcal{S}_{k,m}}}},m = 0,1, \cdots ,M - 1$:} For simplicity, we can apply the LMMSE criterion at the factor node to compute the {\em a posteriori} distribution \cite{tiwari2019low,li2012performance,kay1993fundamentals}:
\begin{align}\label{S_Matrix_I}
{\bf{B}}_{{\mathcal{S}_k}}^{(\iota )} = {\left( {{\bf{H}}_{{\mathcal{S}_k}}^H{\bf{\Sigma }}_{{\mathcal{S}_k}}^{ - 1}{{\bf{H}}_{{\mathcal{S}_k}}} + \text{diag}{{\left\{ {\bm{\eta }}_{{\mathcal{S}_k}}^{(\iota -1 )}\right\} }^{ - 1}}} \right)^{ - 1}},
\end{align}
\begin{align}
{\bf{A}}_{{\mathcal{S}_k}}^{(\iota )} = {\bf{B}}_{{\mathcal{S}_k}}^{(\iota )}\left( {{\bf{H}}_{{\mathcal{S}_k}}^H{\bf{\Sigma }}_{{\mathcal{S}_k}}^{ - 1}{{\bf{y}}_{{\mathcal{S}_k}}} + \text{diag}{{\left\{ {\bm{\eta }}_{{\mathcal{S}_k}}^{(\iota-1 )}\right\} }^{ - 1}}{\bm{\mu }}_{{\mathcal{S}_k}}^{(\iota -1)}} \right),
\end{align}
where ${{\bm{\mu }}_{{\mathcal{S}_k}}^{(\iota -1)}}$ and ${{\bm{\eta }}_{{\mathcal{S}_k}}^{(\iota-1 )}}$ are the mean and variance vectors for the symbols of $k$-th block, which are acquired from the variable nodes in the $(\iota -1)$-th iteration and can be initialized in the first iteration by projecting the probabilities ${P_D}\left( {{{\bf{x}}_{{\mathcal{S}_k}}}} \right)$ from the channel decoders into Gaussian distributions. Following the Gaussian message combining rule \cite{loeliger2007factor}, we then update the extrinsic marginal distribution $q_E^{(\iota )}\left( {{x_{{\mathcal{S}_{k,m}}}}} \right) \sim \mathcal{CN}\left( {C_{{\mathcal{S}_{k,m}}}^{(\iota )},D_{{\mathcal{S}_{k,m}}}^{(\iota )}} \right)$, with
\begin{align}\label{extrinsic_ys}
D_{{\mathcal{S}_{k,m}}}^{(\iota )} = {\left[ {{{\left( {B_{{\mathcal{S}_{k,m}}}^{(\iota )}} \right)}^{ - 1}} - {{\left( {\eta _{{\mathcal{S}_{k,m}}}^{(\iota -1)}} \right)}^{ - 1}}} \right]^{ - 1}}, \quad
C_{{\mathcal{S}_{k,m}}}^{(\iota )} = D_{{\mathcal{S}_{k,m}}}^{(\iota )}\left[ {\frac{{A_{{\mathcal{S}_{k,m}}}^{(\iota )}}}{{B_{{\mathcal{S}_{k,m}}}^{(\iota )}}} - \frac{{\mu _{{\mathcal{S}_{k,m}}}^{(\iota -1)}}}{{\eta _{{\mathcal{S}_{k,m}}}^{(\iota -1)}}}} \right],
\end{align}
where ${B_{{\mathcal{S}_{k,m}}}^{(\iota )}}$ denotes the $m$-th diagonal element of ${\bf{B}}_{{\mathcal{S}_k}}^{(\iota )}$. At last, the factor node sends the mean $C_{{\mathcal{S}_{k,m}}}^{(\iota )}$ and variance $D_{{\mathcal{S}_{k,m}}}^{(\iota )}$ to the variable nodes.

\textbf{2) From variable nodes ${x_{{\mathcal{S}_{k,m}}}},m = 0,1, \cdots ,M - 1$ to factor node ${{\bf{y}}_{{\mathcal{S}_k}}}$:} The {\em a posteriori} probability can be decomposed as follows at each variable node
\begin{align}
{{\bar P}^{(\iota )}}\left( {{x_{{\mathcal{S}_{k,m}}}} = \chi } \right) \propto {P_D}\left( {{x_{{\mathcal{S}_{k,m}}}} = \chi } \right)\exp \left( { - \frac{{{{\left| {\chi  - C_{{\mathcal{S}_{k,m}}}^{(\iota )}} \right|}^2}}}{{D_{{\mathcal{S}_{k,m}}}^{(\iota )}}}} \right),\;\forall \chi  \in \mathbb{A},
\end{align}
and then projected into a Gaussian distribution $\mathcal{CN}\left( {E_{{\mathcal{S}_{k,m}}}^{(\iota )},F_{{\mathcal{S}_{k,m}}}^{(\iota )}} \right)$. In order to avoid numerical instabilities, we set a minimum allowed variance $\varepsilon $, i.e., $F_{{\mathcal{S}_{k,m}}}^{(\iota )} = \max \{ \varepsilon ,F_{{\mathcal{S}_{k,m}}}^{(\iota )}\} $. Then, we can update the extrinsic distribution $\bar q_E^{(\iota )}\left( {{x_{{\mathcal{S}_{k,m}}}}} \right) \sim \mathcal{CN}\left( {\bar \mu _{{\mathcal{S}_{k,m}}}^{(\iota )},\bar \eta _{{\mathcal{S}_{k,m}}}^{(\iota )}} \right)$, where
\begin{align}
\bar \eta _{{\mathcal{S}_{k,m}}}^{(\iota )} = {\left[ {{{\left( {F_{{\mathcal{S}_{k,m}}}^{(\iota )}} \right)}^{ - 1}} - {{\left( {D_{{\mathcal{S}_{k,m}}}^{(\iota )}} \right)}^{ - 1}}} \right]^{ - 1}}, \quad
\bar \mu _{{\mathcal{S}_{k,m}}}^{(\iota )} = \bar \eta _{{\mathcal{S}_{k,m}}}^{(\iota )}\left[ {\frac{{E_{{\mathcal{S}_{k,m}}}^{(\iota )}}}{{F_{{\mathcal{S}_{k,m}}}^{(\iota )}}} - \frac{{C_{{\mathcal{S}_{k,m}}}^{(\iota )}}}{{D_{{\mathcal{S}_{k,m}}}^{(\iota )}}}} \right].
\end{align}
To improve the performance and control the convergence speed of the algorithm, we apply a damping factor ${\delta _\mathcal{S}} \in \left( {0,1} \right]$ \cite{minka2005divergence,santos2018turbo,csahin2018iterative}, i.e.,
\begin{align}\label{extrinsic_xs}
\eta _{{\mathcal{S}_{k,m}}}^{(\iota )} = {\left[ {\frac{{{\delta _\mathcal{S}}}}{{\bar \eta _{{\mathcal{S}_{k,m}}}^{(\iota )}}} + \frac{{(1 - {\delta _\mathcal{S}})}}{{\eta _{{\mathcal{S}_{k,m}}}^{(\iota -1)}}}} \right]^{ - 1}}, \quad
\mu _{{\mathcal{S}_{k,m}}}^{(\iota )} = \eta _{{\mathcal{S}_{k,m}}}^{(\iota )}\left[ {{\delta _\mathcal{S}}\frac{{\bar \mu _{{\mathcal{S}_{k,m}}}^{(\iota )}}}{{\bar \eta _{{\mathcal{S}_{k,m}}}^{(\iota )}}} + (1 - {\delta _\mathcal{S}})\frac{{\mu _{{\mathcal{S}_{k,m}}}^{(\iota -1)}}}{{\eta _{{\mathcal{S}_{k,m}}}^{(\iota -1)}}}} \right].
\end{align}
If the renewed variance $\eta _{{\mathcal{S}_{k,m}}}^{(\iota )}$ 
becomes negative, we would skip this update. Finally, $\mu _{{\mathcal{S}_{k,m}}}^{(\iota )}$ and $\eta_{{\mathcal{S}_{k,m}}}^{(\iota )}$ are passed back to the factor node.

\textbf{3) Convergence indicator:} We define a convergence indicator $\alpha _\mathcal{S}^{(\iota )}$ for stationary users as  
\begin{align}\label{conv_s}
\alpha _\mathcal{S}^{(\iota )} = \frac{1}{M}\sum\limits_{m = 0}^{M - 1} {\mathbb{I}\left( {\mathop {\max }\limits_{\chi  \in \mathbb{A}} {{\bar P}^{(\iota )}}\left( {{x_{{\mathcal{S}_{k,m}}}} = \chi } \right) \geq 1-\varrho } \right)},
\end{align}
where $\mathbb{I}(\cdot)$ represents the indicator function and $\varrho >0$ is a small value.

\textbf{4) Update criterion:} If $\alpha _\mathcal{S}^{(\iota )} > \alpha _\mathcal{S}^{(\iota  - 1)}$, we update
\begin{align}
P\left( {{{\bf{x}}_{{\mathcal{S}_k}}}} \right) = {{\bar P}^{(\iota )}}\left( {{{\bf{x}}_{{\mathcal{S}_k}}}} \right),\; {P_E}\left( {{{\bf{x}}_{{\mathcal{S}_k}}}} \right) = \bar P_E^{(\iota )}\left( {{{\bf{x}}_{{\mathcal{S}_k}}}} \right),
\end{align}
where $\bar P_E^{(\iota )}\left( {{x_{{\mathcal{S}_{k,m}}}} = \chi } \right) \propto \exp \left( { - \frac{{{{\left| {\chi  - C_{{\mathcal{S}_{k,m}}}^{(\iota )}} \right|}^2}}}{{D_{{\mathcal{S}_{k,m}}}^{(\iota )}}}} \right),\forall \chi  \in \mathbb{A}$.

\textbf{5) Stopping criterion:} The OAMP-LMMSE detector terminates when either $\alpha _\mathcal{S}^{(\iota )} = 1$ or the maximum iteration number ${n_\mathcal{S}}$ is reached.

Note that the OAMP-LMMSE can be used to detect each block's symbols parallelly, thus, the delay to detect the whole stationary users' signal is manageable. Finally, we obtain the extrinsic probabilities of ${P_E}\left( {{{\bf{e}}_u}} \right),\forall u \in {\cal U}$ according to the indices of each stationary user's symbols corresponding to ${P_E}\left( {{{\bf{x}}_{{\mathcal{S}}}}} \right)$, and output $P\left( {{{\bf{x}}_{{\mathcal{S}}}}} \right)$ and ${P_E}\left( {{{\bf{e}}_u}} \right),\forall u \in {\cal U}$.

\subsection{GAMP-EP Detector for Mobile Users in OBNOMA}
With the {\em a posteriori} probabilities $P\left( {{{\bf{x}}_{{\mathcal{S}}}}} \right)$ of the stationary users' symbols from OAMP-LMMSE detector, we first project each entry of these probabilities into Gaussian distribution, denoted as $\hat q\left( {{x_{{\mathcal{S}_c}}}} \right) \sim \mathcal{CN}\left( {{{\hat \mu }_{{\mathcal{S}_c}}},{{\hat \eta }_{{\mathcal{S}_c}}}} \right),c = 0,1, \cdots ,MN - 1$. From (\ref{input_output_simplify}), we can approximate
\begin{align}\label{mobile_form}
{{\bf{y}}_\mathcal{M}}  \simeq  {{\bf{H}}_{\cal M}}{{\bf{x}}_{\cal M}} + {{\bf{z}}_\mathcal{M}},
\end{align}
where ${{\bf{y}}_\mathcal{M}} = {\bf{y}} - {{\bf{H}}_{\cal S}}{{{\bm{\hat \mu }}}_\mathcal{S}}$ and ${{\bf{z}}_\mathcal{M}}$ is modeled as $\mathcal{CN}\left( {{\bf{0}},{{\bf{\Sigma }}_\mathcal{M}}} \right)$ with covariance matrix ${{\bf{\Sigma }}_\mathcal{M}} = {{\bf{\Sigma }}_{\bm{\omega }}} + {{\bf{H}}_\mathcal{S}}\text{diag}\left\{ {{{{\bm{\hat \eta }}}_\mathcal{S}}} \right\}{\bf{H}}_\mathcal{S}^H$.

Direct solution of (\ref{mobile_form}) by employing OAMP-LMMSE detector could be computationally costly since it involves a large matrix inverse while the typical value of $MN$ can be in the order of thousands or even larger in OTFS system. Fortunately, ${{\bf{H}}_{\cal M}}$ is a sparse matrix and the index sets of non-zero components in the $d$-th row and $c$-th column can be denoted as $\mathcal{I}(d)$, $d = 0,1, \cdots ,MN - 1$ and $\mathcal{J}(c)$, $c = 0,1, \cdots ,MN - 1$, respectively. We also represent the corresponding numbers of non-zero components in the $d$-th row and $c$-th column as $\left| {\mathcal{I}(d)} \right|$ and $\left| {\mathcal{J}(c)} \right|$. Hence, we can use a sparsely connected factor graph to describe the system model of (\ref{mobile_form}), where the entries of ${{\bf{y}}_\mathcal{M}}$ and ${{\bf{x}}_{\cal M}}$ are regarded as factor nodes and variable nodes, respectively.

%as shown in Fig. \ref{FGM}.
%\begin{figure}%[bth]
%  \centering
%  \includegraphics[width=3.8in]{FG_M.pdf}
%  \caption{Factor graph describing (\ref{mobile_form}).}\label{FGM}
%\end{figure}

Unlike the existing works in \cite{raviteja2018interference,ramachandran2018mimo,deka2020otfs,ge2020}, 
which use MP for symbol detection. 
Here, we propose a GAMP-EP detector for performance 
improvement. Note that EP algorithm is a Bayesian inference technique 
developed to approximate the true posterior. 
It has been already successfully applied for symbol detection in the stationary communication systems \cite{santos2018turbo,csahin2018iterative} 
with modest complexity.
We approximate the messages updated and passed between the factor nodes and variable nodes on the factor graph as Gaussian. 
\textbf{Algorithm \ref{alg:C}} contains a detailed description 
of GAMP-EP detector, and the steps of the $\kappa $-th iteration are introduced below:
\begin{algorithm}%[H]
\caption{GAMP-EP Detector}
\label{alg:C}
\begin{algorithmic}
\STATE {Input: ${P_D}\left( {{{\bf{e}}_v}} \right),\forall v \in {\cal V}$, $P\left( {{{\bf{x}}_{{\mathcal{S}}}}} \right)$, $\bf{y}$, ${{{\bf{ H}}}_\mathcal{S}}$, ${{{\bf{ H}}}_\mathcal{M}}$ and ${n_\mathcal{M}}$.}
\STATE {Initialization: ${P_D}\left( {{{\bf{x}}_{\cal M}}} \right)$, ${{{\bm{\hat \mu }}}_\mathcal{S}}$, ${{{\bm{\hat \eta }}}_\mathcal{S}}$, ${{\bf{y}}_\mathcal{M}}$, ${\mu _{{\mathcal{M}_{d,c}}}^{(0)}}={\mu _{{\mathcal{M}_c}}}$, ${\eta _{{\mathcal{M}_{d,c}}}^{(0)}}={\eta _{{\mathcal{M}_c}}},c = 0,1, \cdots ,MN - 1,d \in \mathcal{J}(c)$, $\alpha _\mathcal{M}^{(0)}=0$ and iteration count $\kappa=1$.}
\REPEAT
\STATE {1)\; Each factor node ${y_{{\mathcal{M}_d}}}$ generates the mean $C_{{\mathcal{M}_{d,c}}}^{(\kappa )}$ and variance $D_{{\mathcal{M}_{d,c}}}^{(\kappa )}$ in (\ref{extrinsic_ym_m}) and (\ref{extrinsic_ym_v}), then delivers them to the connected variable nodes ${x_{{\mathcal{M}_c}}},c \in \mathcal{I}(d)$;}
\STATE {2)\; Each variable node ${x_{{\mathcal{M}_c}}}$ computes the mean $\mu _{{\mathcal{M}_{d,c}}}^{(\kappa )}$ and variance $\eta _{{\mathcal{M}_{d,c}}}^{(\kappa )}$ in (\ref{extrinsic_xm}), and sends them back to the connected factor nodes ${y_{{\mathcal{M}_d}}},d \in \mathcal{J}(c)$;}
\STATE {3)\; Calculate the convergence indicator $\alpha _\mathcal{M}^{(\kappa )}$ in (\ref{conv_m});}
\STATE {4)\; Update $P\left( {{{\bf{x}}_\mathcal{M}}} \right) = {{\bar P}^{(\kappa )}}\left( {{{\bf{x}}_\mathcal{M}}} \right)$ and ${P_E}\left( {{{\bf{x}}_\mathcal{M}}} \right) = \bar P_E^{(\kappa )}\left( {{{\bf{x}}_\mathcal{M}}} \right)$ if $\alpha _\mathcal{M}^{(\kappa )}>\alpha _\mathcal{M}^{(\kappa -1)}$;}
\STATE {5)\; $\kappa: = \kappa + 1$;}
\UNTIL{$\alpha _\mathcal{M}^{(\kappa )}=1$ or $\kappa={n_\mathcal{M}}$.}
\STATE {Output: ${P_E}\left( {{{\bf{e}}_v}} \right),\forall v \in {\cal V}$.}
\end{algorithmic}
\end{algorithm}

\textbf{1) From factor node ${y_{{\mathcal{M}_d}}}$ to variable nodes ${x_{{\mathcal{M}_c}}},c \in \mathcal{I}(d)$:} At each factor node, we can represent the received signal ${y_{{\mathcal{M}_d}}}$ as
\begin{align}\label{mobility_y}
{y_{{\mathcal{M}_d}}} = {H_{{\mathcal{M}_{d,c}}}}{x_{{\mathcal{M}_c}}} + \sum\limits_{e \in \mathcal{I}(d),e \ne c} {{H_{{\mathcal{M}_{d,e}}}}{x_{{\mathcal{M}_e}}}}  + {z_{{\mathcal{M}_d}}}.
\end{align}
The messages passed from the factor node ${y_{{\mathcal{M}_d}}}$ to variable node ${x_{{\mathcal{M}_c}}}$ are the mean $C_{{\mathcal{M}_{d,c}}}^{(\kappa )}$ and variance $D_{{\mathcal{M}_{d,c}}}^{(\kappa )}$, respectively, given by
\begin{align}\label{extrinsic_ym_m}
C_{{\mathcal{M}_{d,c}}}^{(\kappa )} = {{\left[ {{y_{{\mathcal{M}_d}}} - \sum\limits_{e \in \mathcal{I}(d),e \ne c} {{H_{{\mathcal{M}_{d,e}}}}\mu _{{\mathcal{M}_{d,e}}}^{(\kappa -1)}} } \right]} \mathord{\left/
 {\vphantom {{\left[ {{y_{{\mathcal{M}_d}}} - \sum\limits_{e \in \mathcal{I}(d),e \ne c} {{H_{{\mathcal{M}_{d,e}}}}\mu _{{\mathcal{M}_{d,e}}}^{(\kappa -1)}} } \right]} {{H_{{\mathcal{M}_{d,c}}}}}}} \right.
 \kern-\nulldelimiterspace} {{H_{{\mathcal{M}_{d,c}}}}}},
\end{align}
\begin{align}\label{extrinsic_ym_v}
D_{{\mathcal{M}_{d,c}}}^{(\kappa )} = {{\left[ {\sum\limits_{e \in \mathcal{I}(d),e \ne c} {{{\left| {{H_{{\mathcal{M}_{d,e}}}}} \right|}^2}\eta _{{\mathcal{M}_{d,e}}}^{(\kappa -1)}}  + {\sigma _{{\mathcal{M}_d}}}} \right]} \mathord{\left/
 {\vphantom {{\left[ {\sum\limits_{e \in \mathcal{I}(d),e \ne c} {{{\left| {{H_{{\mathcal{M}_{d,e}}}}} \right|}^2}\eta _{{\mathcal{M}_{d,e}}}^{(\kappa -1)}}  + {\sigma _{{\mathcal{M}_d}}}} \right]} {{{\left| {{H_{{\mathcal{M}_{d,c}}}}} \right|}^2}}}} \right.
 \kern-\nulldelimiterspace} {{{\left| {{H_{{\mathcal{M}_{d,c}}}}} \right|}^2}}},
\end{align}
where ${\mu _{{\mathcal{M}_{d,e}}}^{(\kappa -1)}}$ and ${\eta _{{\mathcal{M}_{d,e}}}^{(\kappa -1)}}$ are the mean and variance received from variable node ${x_{{\mathcal{M}_e}}}$ in the $(\kappa -1)$-th iteration. They can be initialized in the first iteration according to (\ref{Gaussi_pro}). ${{\sigma _{{\mathcal{M}_d}}}}$ is the $d$-th diagonal element of ${{\bf{\Sigma }}_\mathcal{M}}$.

\textbf{2) From variable node ${x_{{\mathcal{M}_c}}}$ to factor nodes ${y_{{\mathcal{M}_d}}},d \in \mathcal{J}(c)$:} The {\em a posteriori} probability at each variable node is given by
\begin{align}
{{\bar P}^{(\kappa )}}\left( {{x_{{\mathcal{M}_c}}} = \chi } \right) \propto {P_D}\left( {{x_{{\mathcal{M}_c}}} = \chi } \right)\mathop \prod \limits_{e \in \mathcal{J}(c)} \exp \left( { - \frac{{{{\left| {\chi  - C_{{\mathcal{M}_{e,c}}}^{(\kappa )}} \right|}^2}}}{{D_{{\mathcal{M}_{e,c}}}^{(\kappa )}}}} \right),\;\forall \chi  \in \mathbb{A}.
\end{align}
We again project this probability into a Gaussian distribution $\mathcal{CN}\left( {E_{{\mathcal{M}_c}}^{(\kappa )},F_{{\mathcal{M}_c}}^{(\kappa )}} \right)$ and set a minimum allowed variance $\varepsilon $, i.e., $F_{{\mathcal{M}_c}}^{(\kappa )} = \max \{ \varepsilon ,F_{{\mathcal{M}_c}}^{(\kappa )}\} $ to avoid numerical instabilities. We then update the extrinsic distribution $\bar q_E^{(\kappa )}\left( {{x_{{\mathcal{M}_{d,c}}}}} \right) \sim \mathcal{CN}\left( {\bar \mu _{{\mathcal{M}_{d,c}}}^{(\kappa )},\bar \eta _{{\mathcal{M}_{d,c}}}^{(\kappa )}} \right)$ in which 
\begin{align}
\bar \eta _{{\mathcal{M}_{d,c}}}^{(\kappa )} = {\left[ {{{\left( {F_{{\mathcal{M}_c}}^{(\kappa )}} \right)}^{ - 1}} - {{\left( {D_{{\mathcal{M}_{d,c}}}^{(\kappa )}} \right)}^{ - 1}}} \right]^{ - 1}}, \quad
\bar \mu _{{\mathcal{M}_{d,c}}}^{(\kappa )} = \bar \eta _{{\mathcal{M}_{d,c}}}^{(\kappa )}\left[ {\frac{{E_{{\mathcal{M}_c}}^{(\kappa )}}}{{F_{{\mathcal{M}_c}}^{(\kappa )}}} - \frac{{C_{{\mathcal{M}_{d,c}}}^{(\kappa )}}}{{D_{{\mathcal{M}_{d,c}}}^{(\kappa )}}}} \right].
\end{align}
Finally, the variable node ${x_{{\mathcal{M}_c}}}$ updates the mean $\mu _{{\mathcal{M}_{d,c}}}^{(\kappa )}$ and variance $\eta _{{\mathcal{M}_{d,c}}}^{(\kappa )}$ as follows and delivers them to the factor node ${y_{{\mathcal{M}_d}}}$.  
\begin{align}\label{extrinsic_xm}
\eta _{{\mathcal{M}_{d,c}}}^{(\kappa )} = {\left[ {\frac{{{\delta _\mathcal{M}}}}{{\bar \eta _{{\mathcal{M}_{d,c}}}^{(\kappa )}}} + \frac{{(1 - {\delta _\mathcal{M}})}}{{\eta _{{\mathcal{M}_{d,c}}}^{(\kappa -1)}}}} \right]^{ - 1}}, \quad
\mu _{{\mathcal{M}_{d,c}}}^{(\kappa )} = \eta _{{\mathcal{M}_{d,c}}}^{(\kappa )}\left[ {{\delta _\mathcal{M}}\frac{{\bar \mu _{{\mathcal{M}_{d,c}}}^{(\kappa )}}}{{\bar \eta _{{\mathcal{M}_{d,c}}}^{(\kappa )}}} + (1 - {\delta _\mathcal{M}})\frac{{\mu _{{\mathcal{M}_{d,c}}}^{(\kappa -1)}}}{{\eta _{{\mathcal{M}_{d,c}}}^{(\kappa -1)}}}} \right],
\end{align}
where ${\delta _\mathcal{M}} \in \left( {0,1} \right]$ is a damping factor applied to improve the accuracy and convergence. Similar to OAMP-LMMSE detector, we ignore the update if the variance $\eta _{{\mathcal{M}_{d,c}}}^{(\kappa )}$ is negative.

\textbf{3) Convergence indicator:} The convergence indicator $\alpha _\mathcal{M}^{(\kappa )}$ for mobile users is defined as
\begin{align}\label{conv_m}
\alpha _\mathcal{M}^{(\kappa )} = \frac{1}{{MN}}\sum\limits_{c = 0}^{MN - 1} {\mathbb{I}\left( {\mathop {\max }\limits_{\chi  \in \mathbb{A}} {{\bar P}^{(\kappa )}}\left( {{x_{{\mathcal{M}_c}}} = \chi } \right) \ge 1 - \varrho } \right)}.
\end{align}

\textbf{4) Update criterion:} If $\alpha _\mathcal{M}^{(\kappa )} > \alpha _\mathcal{M}^{(\kappa  - 1)}$, we update
\begin{align}
P\left( {{{\bf{x}}_\mathcal{M}}} \right) = {{\bar P}^{(\kappa )}}\left( {{{\bf{x}}_\mathcal{M}}} \right),\;{P_E}\left( {{{\bf{x}}_\mathcal{M}}} \right) = \bar P_E^{(\kappa )}\left( {{{\bf{x}}_\mathcal{M}}} \right),
\end{align}
where $\bar P_E^{(\kappa )}\left( {{x_{{\mathcal{M}_c}}} = \chi } \right) \propto \mathop \prod \limits_{e \in \mathcal{J}(c)} \exp \left( { - \frac{{{{\left| {\chi  - C_{{\mathcal{M}_{e,c}}}^{(\kappa )}} \right|}^2}}}{{D_{{\mathcal{M}_{e,c}}}^{(\kappa )}}}} \right),\forall \chi  \in \mathbb{A}$.

\textbf{5) Stopping criterion:} The GAMP-EP detector terminates when either $\alpha _\mathcal{M}^{(\kappa )} = 1$ or the maximum iteration number ${n_\mathcal{M}}$ is reached.

Once the stopping criterion is satisfied, we obtain the extrinsic probabilities ${P_E}\left( {{{\bf{e}}_v}} \right),\forall v \in {\cal V}$ according to the indices of each mobile user's symbols corresponding to ${P_E}\left( {{{\bf{x}}_{{\mathcal{M}}}}} \right)$.

\section{Performance Analysis and Complexity Reduction}\label{V_Reduced}
We now analyze the performance property of our proposed iterative SIC turbo receiver and develop reduced complexity variants for both the OAMP-LMMSE and GAMP-EP detectors.

\subsection{Performance Analysis with EXIT Chart}
Based on the main idea of EXIT chart\cite{li2005exit,el2013exit,lou2010soft}, we develop a novel customized variant to analyze the convergence behavior of our proposed iterative SIC turbo receiver.
An EXIT chart is a semi-analytical tool to study the transfer characteristics of mutual information (MI) 
between transmitted bits and their LLRs computed by receiver components through iterations in turbo detector.
It has been widely adopted for convergence behavior analysis and prediction of turbo processing in stationary communication systems involving
single user \cite{lou2010soft} or multiple users \cite{li2005exit}. 
Specifically, the receiver components are modeled as devices mapping the
{\em a priori} MI $I_i$ at the input to a new extrinsic MI $I_e$ at the 
output. Based on EXIT charts, extrinsic information exchanges
between the detector and the channel decoder can be visualized as a 
decoding trajectory. This yields an asymptotic convergence analysis 
for turbo receivers.

Unfortunately, EXIT chart cannot be directly applied to
convergence analysis of NOMA systems because
of user asymmetry in our proposed OBNOMA framework. 
To this end, we develop a customized EXIT chart for OBNOMA
to analyze the convergence of our proposed iterative SIC turbo receiver. 

Note that the input {\em a priori} information of mobile users 
will affect the output extrinsic information of stationary users in the 
SIC detector, and vice versa. 
Hence, we need to separately depict EXIT charts for stationary and mobile users. 
Specifically, 
we fix the input {\em a priori} MI of OBNOMA mobile users to
several different values (i.e., different $I_i^{(\mathcal{M})}$) 
in the SIC detector and generate the corresponding EXIT chart for OBNOMA stationary 
users. 
The system decoding trajectory path will follow the transfer 
curves of the channel decoders and the detectors based
on different $I_i^{(\mathcal{M})}$, and finally approach the desired operating point.

We can similarly generate EXIT chart for OBNOMA mobile users by fixing 
the input {\em a priori} MI of OBNOMA stationary users in the SIC detector. 
Compared with the traditional EXIT chart, the newly customized
EXIT chart can provide more insights into
the iterative behavior of the proposed SIC turbo receiver 
and graphically anticipate its convergence better. 

\begin{figure}[ht]
\begin{subfigure}{.5\textwidth}
  \centering
  % include first image
  \includegraphics[width=1\linewidth]{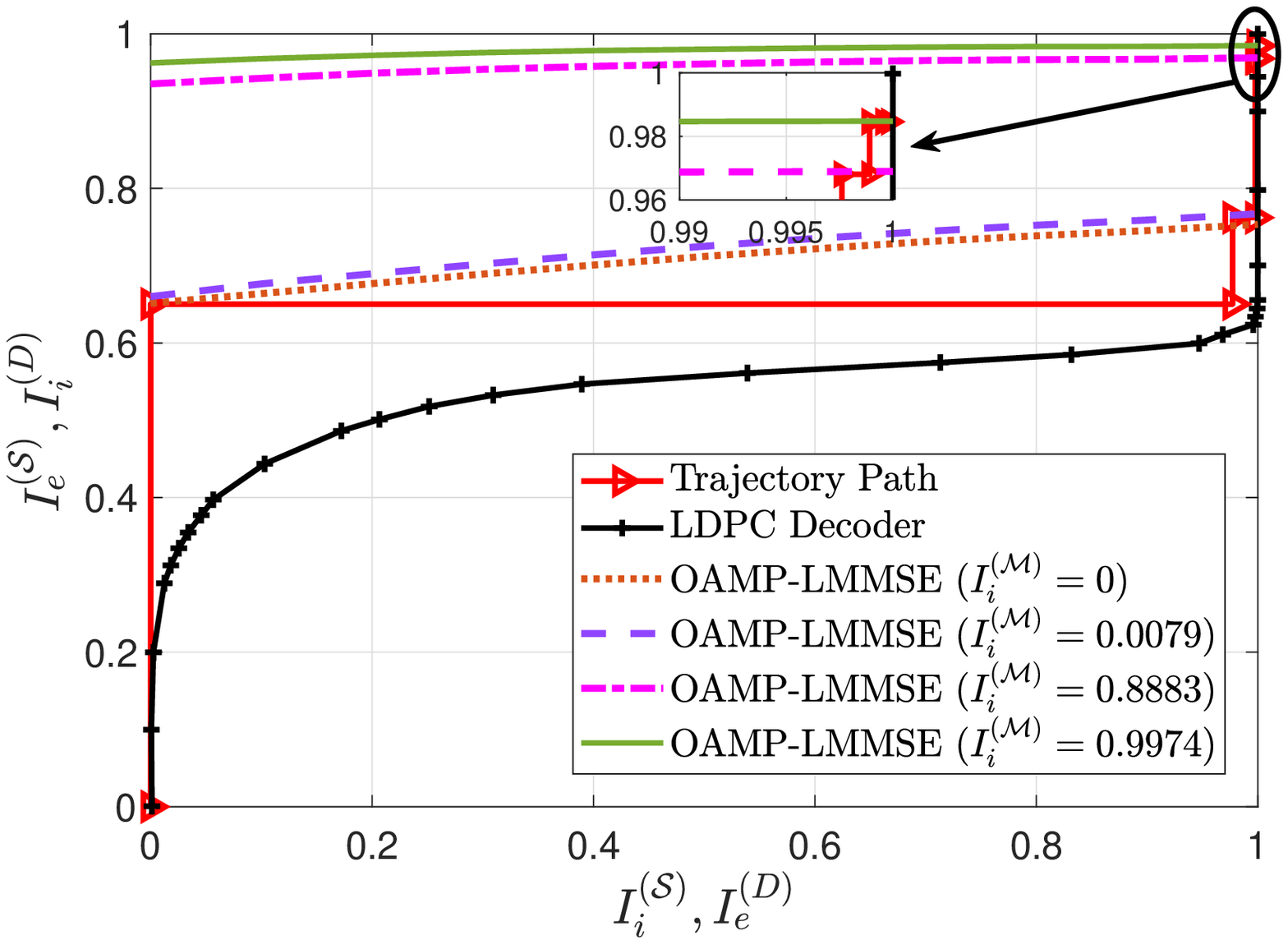}
  \caption{EXIT chart for stationary users.}
  \label{fig_EXIT_s}
\end{subfigure}
\begin{subfigure}{.5\textwidth}
  \centering
  % include second image
  \includegraphics[width=1\linewidth]{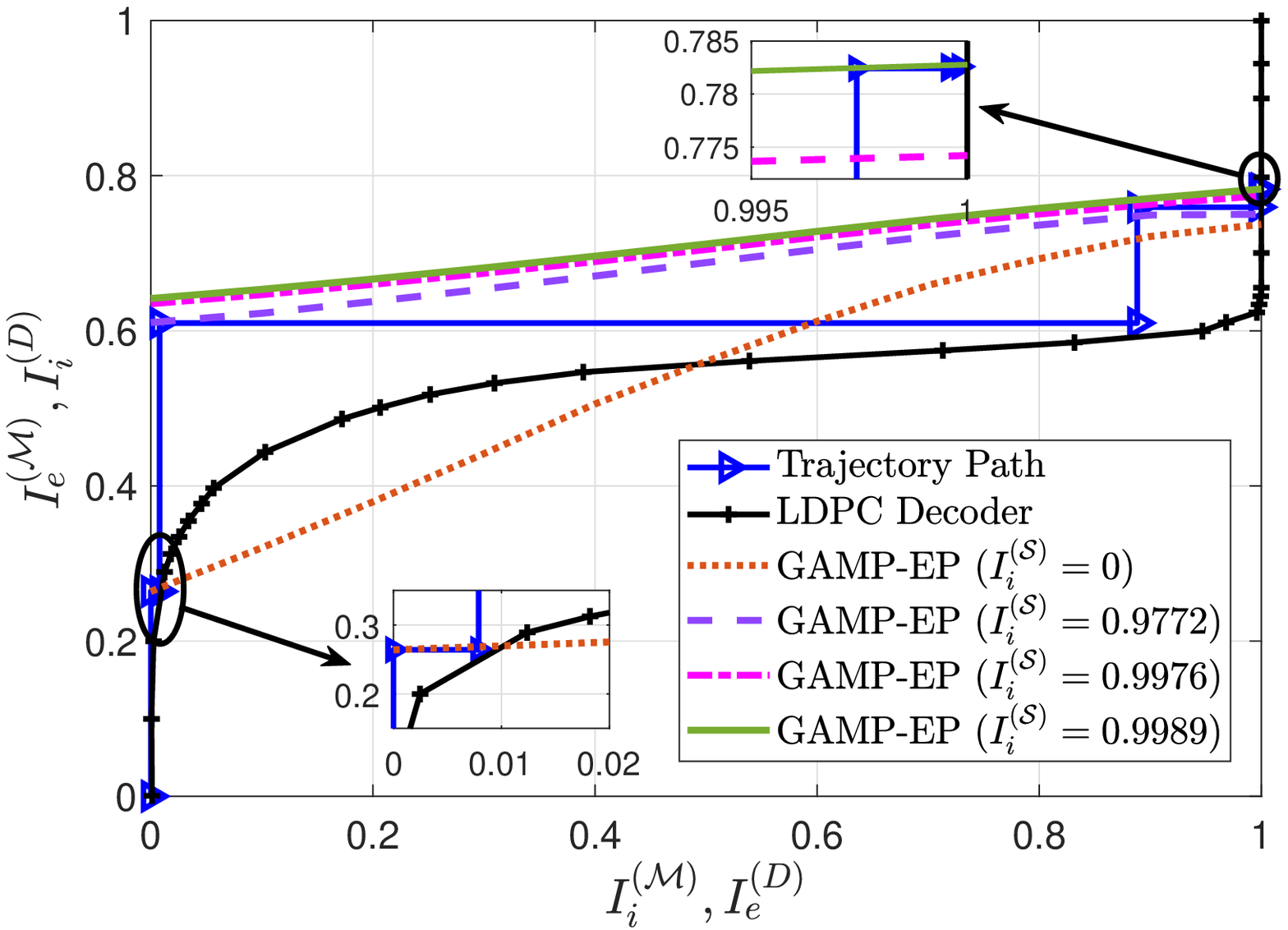}
  \caption{EXIT chart for mobile users.}
  \label{fig_EXIT_m}
\end{subfigure}
\caption{EXIT charts for the iterative SIC turbo receiver with ${E_{\cal S}}/{E_{\cal M}}=5$ dB and ${E_{\cal M}}/{N_0}=3.5$ dB.}
\label{fig_EXIT}
\end{figure}
To illustrate how the customized EXIT chart works for NBNOMA, 
Fig. \ref{fig_EXIT} shows an example of the proposed iterative SIC turbo receiver
in system with
$U=4$ stationary users and $V=4$ mobile users.
All user symbols are QPSK and channel decoders are low-density
parity-check (LDPC) as introduced in Section \ref{VI_Simulation}.
We also set the relative signal energy ratio between stationary
and mobile users to ${E_{\cal S}}/{E_{\cal M}}=5$ dB
and fix the mobile users' SNR to ${E_{\cal M}}/{N_0}=3.5$ dB. 
Here, a typical urban channel model \cite{failli1989digital} is adopted   
for each user and the channel response for each mobile user is generated
by utilizing Jakes formulation \cite{surabhi2019low,raviteja2018interference,ge2020} 
with maximum Doppler spread equals to $1111$ Hz. 
The stationary users' channels are generated with $0$ Doppler shift. 

From Fig. \ref{fig_EXIT_s} and Fig. \ref{fig_EXIT_m},
we observe that the system trajectories are staircase traces between the transfer curves of the detector and decoder components for both the stationary and mobile users.
In addition, the convergence region and average required number of iterations for the proposed iterative SIC turbo receiver can be predicted by EXIT charts.
By checking the quantity of staircase projections (steps) 
through the trajectory curves in Fig. \ref{fig_EXIT},
we notice that four iterations are already enough to achieve the expected performance. This convergence results corroborate the average
bit error rate (BER) of the receiver output for both stationary
and mobile users
in Fig. \ref{fig_EXIT_ite} with ${E_{\cal S}}/{E_{\cal M}}=5$ dB.
We found that the average BER drop becomes negligible beyond four iterations 
for both group of OBNOMA users. 
\begin{figure}
  \centering
  \includegraphics[width=3.4in]{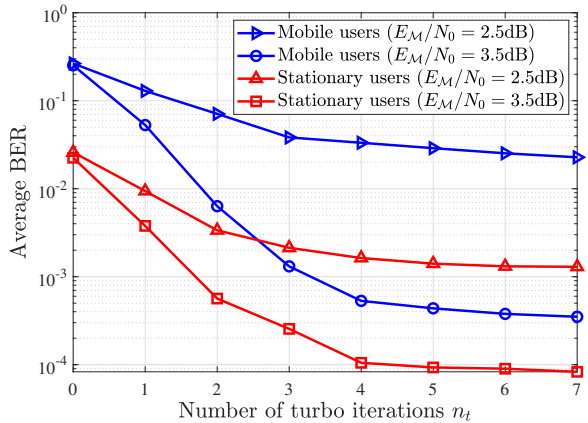}
  \caption{Average BER convergence of OBNOMA users under ${E_{\cal S}}/{E_{\cal M}}=5$ dB.}\label{fig_EXIT_ite}
\end{figure}

\subsection{Complexity Reduction}
The complexity of the proposed iterative SIC turbo receiver is mainly 
dominated by the advanced SIC detector.
TABLE~\ref{tab1} summarizes the implementation complexity 
for each iteration of OAMP-LMMSE, GAMP-EP and traditional MP algorithms \cite{raviteja2018interference,raviteja2019otfs,ramachandran2018mimo}.
The computational cost is measured according to the total number of 
real-field multiplications\footnote{Complex multiplication, inverse, and division 
account for three, four and six real-field multiplications, 
respectively.}, exponential functions, and matrix inverses, respectively.
\begin{table}%[h!]
  \begin{center}
    \caption{Complexity comparison of different algorithms for each iteration.}
    \label{tab1}
    \begin{tabular}{c|c|c|c} % <-- Alignments: 1st column left, 2nd middle and 3rd right, with vertical lines in between
    \hline
      Algorithm & Real-field Multiplication & Exponential & Matrix Inverse\\
      \hline
      OAMP-LMMSE & $6{M^2}N(M + 1) + 10MNQ + 34MN$ & $MNQ$ & $2N$\\
      GAMP-EP & $4QD' + 30D' + 2MNQ + 3MN$ & $QD'$ & -\\
      R-OAMP-LMMSE & $3{M^2}N(M + 3) + 10MNQ + 42MN$ & $MNQ$ & -\\
      R-GAMP-EP & $2MNR(2Q + 15) + 2MNQ + 3MN$ & $MNRQ$ & -\\
      MP & $22QD'+2D'+MNQ$ & $QD'$ & - \\
      \hline
    \end{tabular}
  \end{center}
\end{table}

We note that the OAMP-LMMSE detector complexity depends critically on 
matrix inverse.
The GAMP-EP detector complexity is related to the number of non-zero channel terms (i.e., $D'$) which represent channel matrix sparsity. 
Here, we write $\sum\limits_{d = 0}^{MN - 1} {\left| {{\cal I}(d)} \right|}  = \sum\limits_{c = 0}^{MN - 1} {\left| {{\cal J}(c)} \right|}  = D'$ for conciseness. 
However, the value $D'$ can sometimes be relatively large due to 
many off-grid channel delays and Doppler shifts. 
To reduce receiver complexity, we propose the corresponding low-complexity 
alternatives for OAMP-LMMSE and GAMP-EP, respectively.

\subsubsection{Reduced Complexity Algorithm for Stationary Users}
The complexity of OAMP-LMMSE for stationary users mainly arise from the matrix inverse in (\ref{S_Matrix_I}) with order $\mathcal{O}({M^3}N)$. As low-complexity approximations, similar to vector approximate message passing (VAMP) \cite{rangan2019vector}, we can use ${{\bar \sigma }_{{{\cal S}_k}}}{\bf{I}}$ and $\bar \eta _{{{\cal S}_k}}^{(\iota -1)}{\bf{I}}$ in place of ${{\bf{\Sigma }}_{{\mathcal{S}_k}}}$ and $\text{diag}{{\{ {\bm{\eta }}_{{\mathcal{S}_k}}^{(\iota -1)}\} }}$, respectively. The scalars ${{\bar \sigma }_{{{\cal S}_k}}}$ and $\bar \eta _{{{\cal S}_k}}^{(\iota -1)}$ are the sample average values of the diagonal elements of ${{\bf{\Sigma }}_{{\mathcal{S}_k}}}$ and the variance vector ${{\bm{\eta }}_{{\mathcal{S}_k}}^{(\iota -1)}}$, respectively. As a result, we can approximate (\ref{S_Matrix_I}) with
\begin{align}
{\bf{B}}_{{{\cal S}_k}}^{(\iota )} \approx {\bf{U}}_{{{\cal S}_k}}^H{\left( {{{({{\bar \sigma }_{{{\cal S}_k}}})}^{ - 1}}{\bf{\bar H}}_{{{\cal S}_k}}^H{{{\bf{\bar H}}}_{{{\cal S}_k}}} + {{\left(\bar \eta _{{{\cal S}_k}}^{(\iota -1)}\right)}^{ - 1}}{\bf{I}}} \right)^{ - 1}}{{\bf{U}}_{{{\cal S}_k}}},
\end{align}
where ${{{{\bf{\bar H}}}_{{{\cal S}_k}}}}$ and ${{\bf{U}}_{{{\cal S}_k}}}$ have the similar structures as ${{{\bf{\bar H}}}_k}$ in (\ref{H_diag}) and ${{\bf{U}}_k}$ in (\ref{Static_H}), respectively.
Since ${{({{\bar \sigma }_{{{\cal S}_k}}})}^{ - 1}}{\bf{\bar H}}_{{{\cal S}_k}}^H{{{\bf{\bar H}}}_{{{\cal S}_k}}} + {{\left(\bar \eta _{{{\cal S}_k}}^{(\iota -1)}\right)}^{ - 1}}{\bf{I}}$ is now diagonal, its inverse simply requires inverting the diagonal elements.

\subsubsection{Reduced Complexity Algorithm for Mobile Users}
As we can see, the channel factor graph in GAMP-EP 
has dense connections (edges), leading to relatively large value of $D'$. 
To reduce the resulting complexity, 
we adopt Gaussian approximation to trim part of these edges.

In particular, for each factor node ${y_{{\mathcal{M}_d}}}$, we would sort the corresponding $\left| {\mathcal{I}(d)} \right|$ channel coefficients according to their magnitudes, and choose $R$ largest terms to connect the corresponding edges in the factor graph while eliminating others. Towards this, we can rewrite the received signal ${y_{{\mathcal{M}_d}}}$ at $d$-th factor node in
(\ref{mobility_y}) as
%\begin{align}
%{y_{{{\cal M}_d}}} = \sum\limits_{e \in \Phi (d)} {{H_{{{\cal M}_{d,e}}}}{x_{{{\cal M}_e}}}}  + \underbrace {\sum\limits_{e \in \bar \Phi (d)} {{H_{{{\cal M}_{d,e}}}}{x_{{{\cal M}_e}}}}  + {z_{{{\cal M}_d}}}}_{z{'_{{{\cal M}_d}}}},
%\end{align}
\vspace{-0.4cm}
\begin{align}
{y_{{{\cal M}_d}}} = \sum\limits_{e \in \Phi (d)} {{H_{{{\cal M}_{d,e}}}}{x_{{{\cal M}_e}}}}  + \overbrace {\sum\limits_{e \in \bar \Phi (d)} {{H_{{{\cal M}_{d,e}}}}{x_{{{\cal M}_e}}}}  + {z_{{{\cal M}_d}}}}^{z{'_{{{\cal M}_d}}}},
\end{align}
where ${\Phi (d)}$ and ${\bar \Phi (d)}$ denote the index sets of the $R$ largest terms and the rest $(\left| {\mathcal{I}(d)} \right| - R)$ terms in $\mathcal{I}(d)$, respectively. The eliminated terms plus the noise can be approximately modeled as a Gaussian random variable ${z{'_{{{\cal M}_d}}}}$, where its mean and variance, respectively, given by 
\begin{align}
{\mu _{z{'_{{{\cal M}_d}}}}} = \sum\limits_{e \in \bar \Phi (d)} {{H_{{{\cal M}_{d,e}}}}{\mu _{{{\cal M}_e}}}}, \quad
{\sigma _{z{'_{{{\cal M}_d}}}}} = \sum\limits_{e \in \bar \Phi (d)} {{{\left| {{H_{{{\cal M}_{d,e}}}}} \right|}^2}{\eta _{{{\cal M}_e}}}}  + {\sigma _{{{\cal M}_d}}}.
\end{align}
Through this approach, the channel factor graph is simplified and only the dominant edges shall participate in message updates to approximate the true posterior in GAMP-EP detector.

To summarize, we include the complexity analyses of the proposed reduced complexity algorithms for OAMP-LMMSE and GAMP-EP
(denoted as R-OAMP-LMMSE and R-GAMP-EP, respectively) in TABLE \ref{tab1}
in comparison with the complexity of the original algorithms.

\section{Simulation Results}\label{VI_Simulation}
The performance of our proposed coded uplink OBNOMA scheme and iterative SIC turbo receiver are evaluated for different deployment scenarios in this section. In our simulation setups, we apply carrier frequency of $4$ GHz and subcarrier spacing ${\Delta f}=15$ kHz. Unless otherwise mentioned, we modulate the symbols by Gray-mapped QPSK and set the rolloff factor of the RRC filters as $0.4$ for both the transmitter and receiver. We generate a $(3,6)$-regular LDPC of length $2048$ with rate $1/2$ based on the progressive-edge growth (PEG) algorithm \cite{hu2005regular} and apply the belief propagation \cite{richardson2008modern} with a maximum number of $100$ iterations as channel decoder.

In OBNOMA, the delay-Doppler plane consists of $N=32$ and $M=128$. These delay-Doppler resources are allocated equally to $U=4$ stationary users and $V=4$ mobile users. A typical urban channel model \cite{failli1989digital} is applied with exponential power delay profile for both the stationary and mobile users. The velocity of the mobile user is set to ${\lambda _v} = 300$ km/h, resulting in a maximum Doppler spread $\nu _{{{\mathcal{M}_v}},\text{max}}=1111$ Hz, $\forall v \in \mathcal{V} $. For simplicity, we generate the Doppler shift for the $i$-th delay of the $v$-th mobile user by utilizing the Jakes formulation \cite{surabhi2019low,raviteja2018interference,ge2020}, i.e., $\nu _{{{\mathcal{M}_v}},i}=\nu _{{{\mathcal{M}_v}},\text{max}}\cos ({\rho _{{{\mathcal{M}_v}},i}})$, where ${\rho _{{{\mathcal{M}_v}},i}}$ is uniformly distributed over $[ - \pi ,\pi ]$.

We first assume that full CSI is available at the receiver and also study the impact of channel uncertainties on receiver performance. Without loss of generality, we set ${\delta _\mathcal{S}} ={\delta _\mathcal{M}}=0.3$, $\varepsilon ={10^{ - 8}}$, $\varrho=0.1$, ${n_\mathcal{S}}={n_\mathcal{M}}=20$ and choose $n_t=4$. These parameters were selected after extensive experimentations as a compromise between convergence speed and accuracy. All simulation results are averaged over 500 independent realizations.

\begin{figure}[ht]
\begin{subfigure}{.5\textwidth}
  \centering
  % include first image
  \includegraphics[width=1\linewidth]{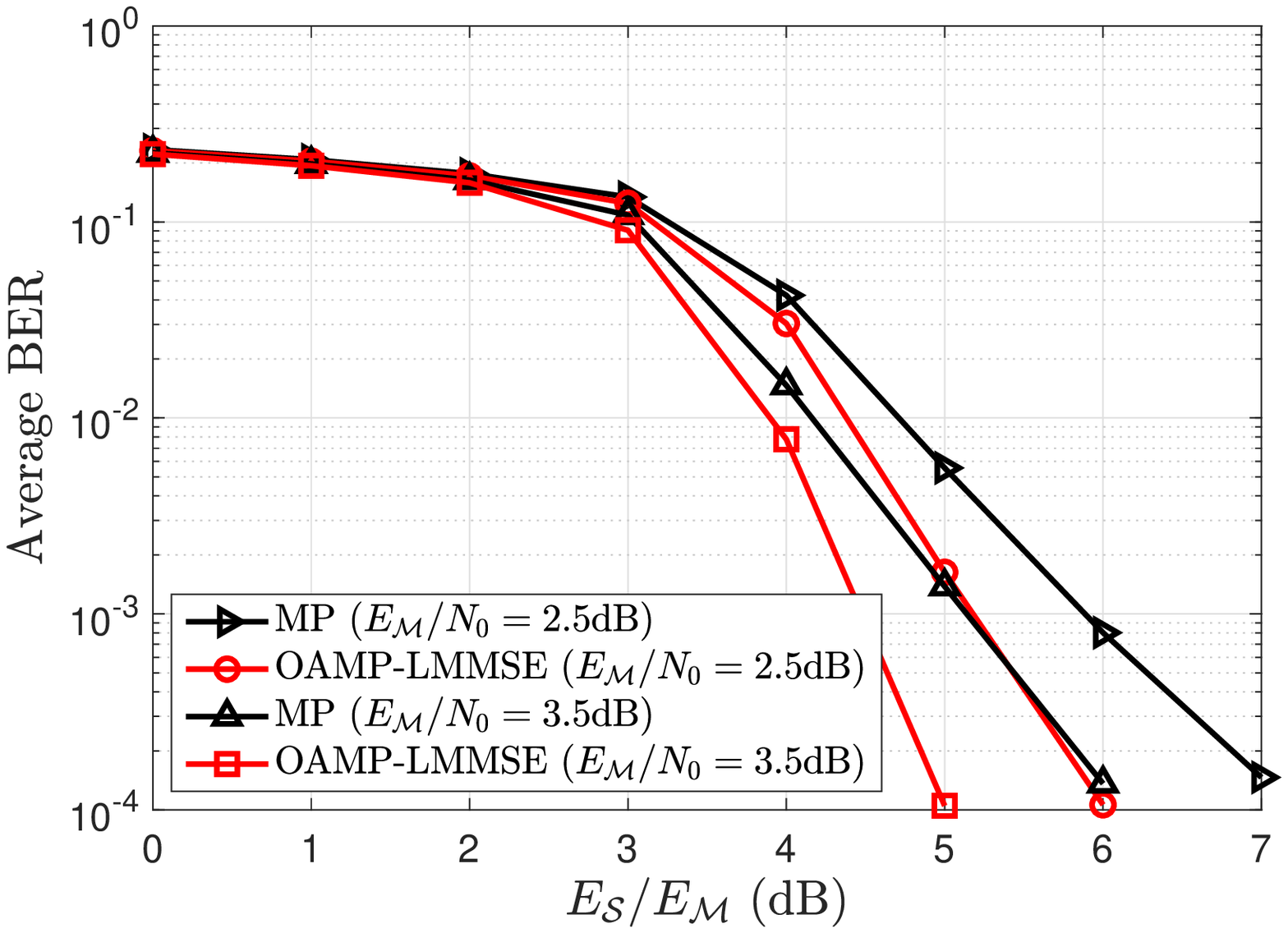}
  \caption{Average BER performance comparison for stationary users.}
  \label{fig:receiver_s}
\end{subfigure}
\begin{subfigure}{.5\textwidth}
  \centering
  % include second image
  \includegraphics[width=1\linewidth]{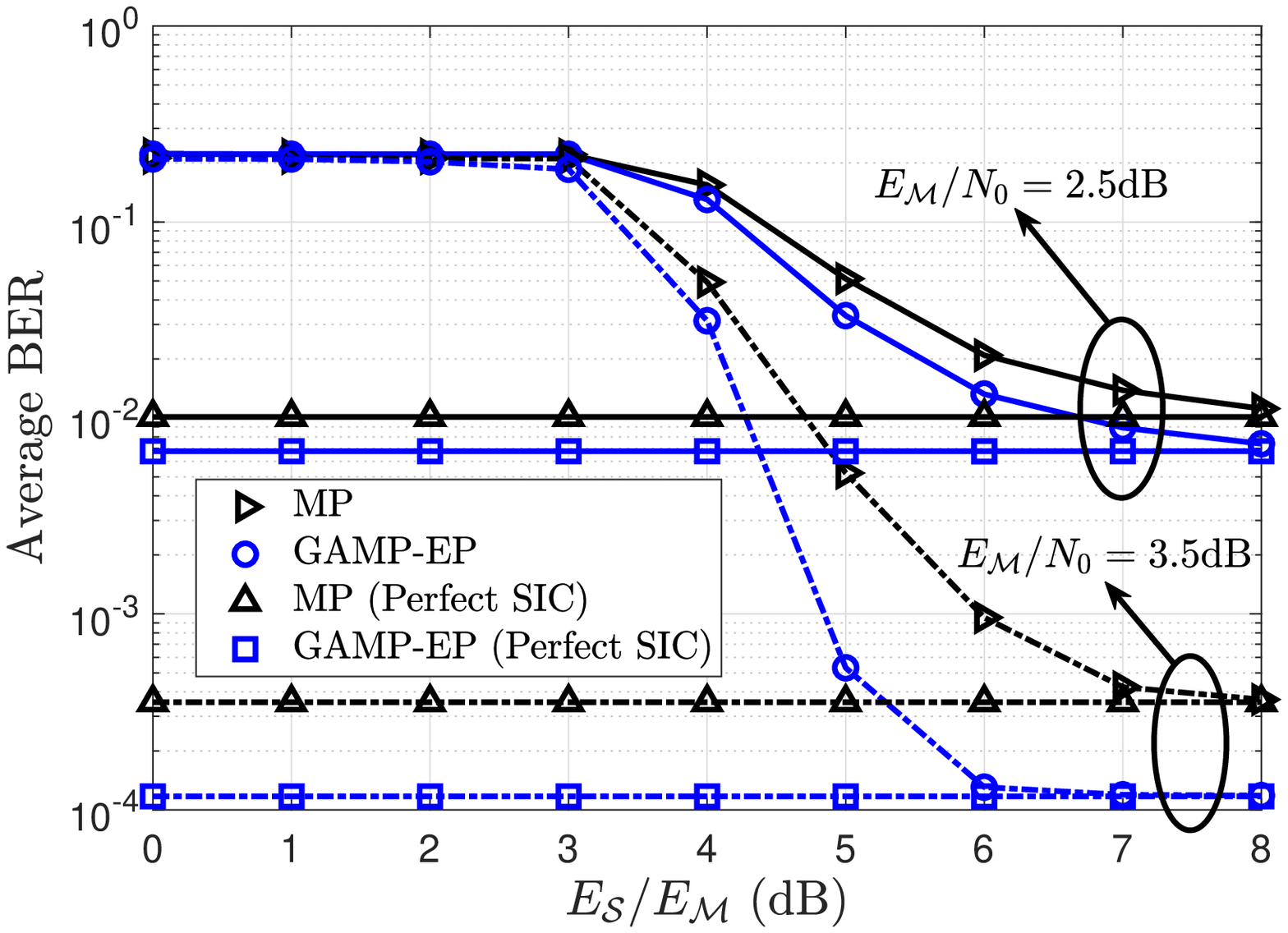}
  \caption{Average BER performance comparison for mobile users.}
  \label{fig:receiver_m}
\end{subfigure}
\caption{Average BER performance comparison of OBNOMA with different detector algorithms.}
\label{fig:receiver}
\end{figure}
Fig. \ref{fig:receiver} compares the average BER performance of the OBNOMA system with different detector algorithms for both the stationary users (Fig. \ref{fig:receiver_s}) and mobile users (Fig. \ref{fig:receiver_m}). To highlight the predominance of the proposed detecting algorithms, we also provide the performance of traditional MP algorithm \cite{raviteja2018interference,raviteja2019otfs,ramachandran2018mimo} as baselines for both the stationary and mobile users in Fig. \ref{fig:receiver}. 

The results reveal that all the receivers deliver improved performance with higher ${E_\mathcal{S}}/{E_\mathcal{M}}$ and ${E_\mathcal{M}}/{N_0}$. However, our proposed OAMP-LMMSE and GAMP-EP detectors outperform the MP detectors for both the stationary and mobile users. We also observe that as ${E_\mathcal{S}}/{E_\mathcal{M}}$ grows, the performance of both stationary and mobile users would improve. In particular, the mobile users' performance would asymptotically approach the performance of perfect SIC. Based on these analysis, we demonstrate that our proposed iterative SIC turbo receiver is practical and robustness against influence of imperfect SIC process.

Fig. \ref{Speed} shows the average BER performance of OBNOMA system at 
different mobile users' velocities with ${E_{\cal S}}/{E_{\cal M}}=5$ dB. 
As the velocities of the mobile users grow, we observe 
that the receiver performance improves modestly before saturation
for velocities beyond $450$ km/h. 
The underlying reason is that OTFS modulation can resolve high
contrast paths in the Doppler dimension at higher mobile users' velocities. 
Consequently, performance advantage becomes conceivable at 
higher user velocities (i.e., high Doppler spread channels).

We also test the average BER performance of OBNOMA when user symbols 
are modulated as 16-QAM. 
Fig. \ref{16QAM} illustrates the average BER performance 
of OBNOMA users with 16-QAM under ${E_{\cal M}}/{N_0}=7.5$ dB for
different settings of $M$ and $N$. 
We notice that the performance of 
both stationary and mobile users degrades for
smaller $M$ and $N$ due to loss of delay-Doppler grid
resolution. 
This results in the diversity loss as the receiver can only
resolve a smaller number of signal paths. 
These tests and results
strongly support the operational consistency of our proposed 
OBNOMA scheme and iterative SIC turbo receiver over different system parameters.
\begin{figure}
\begin{minipage}[t]{0.49\textwidth}%并排放两张图片，每张占页面的0.5，下同。
\centering
\includegraphics[width=3.3in]{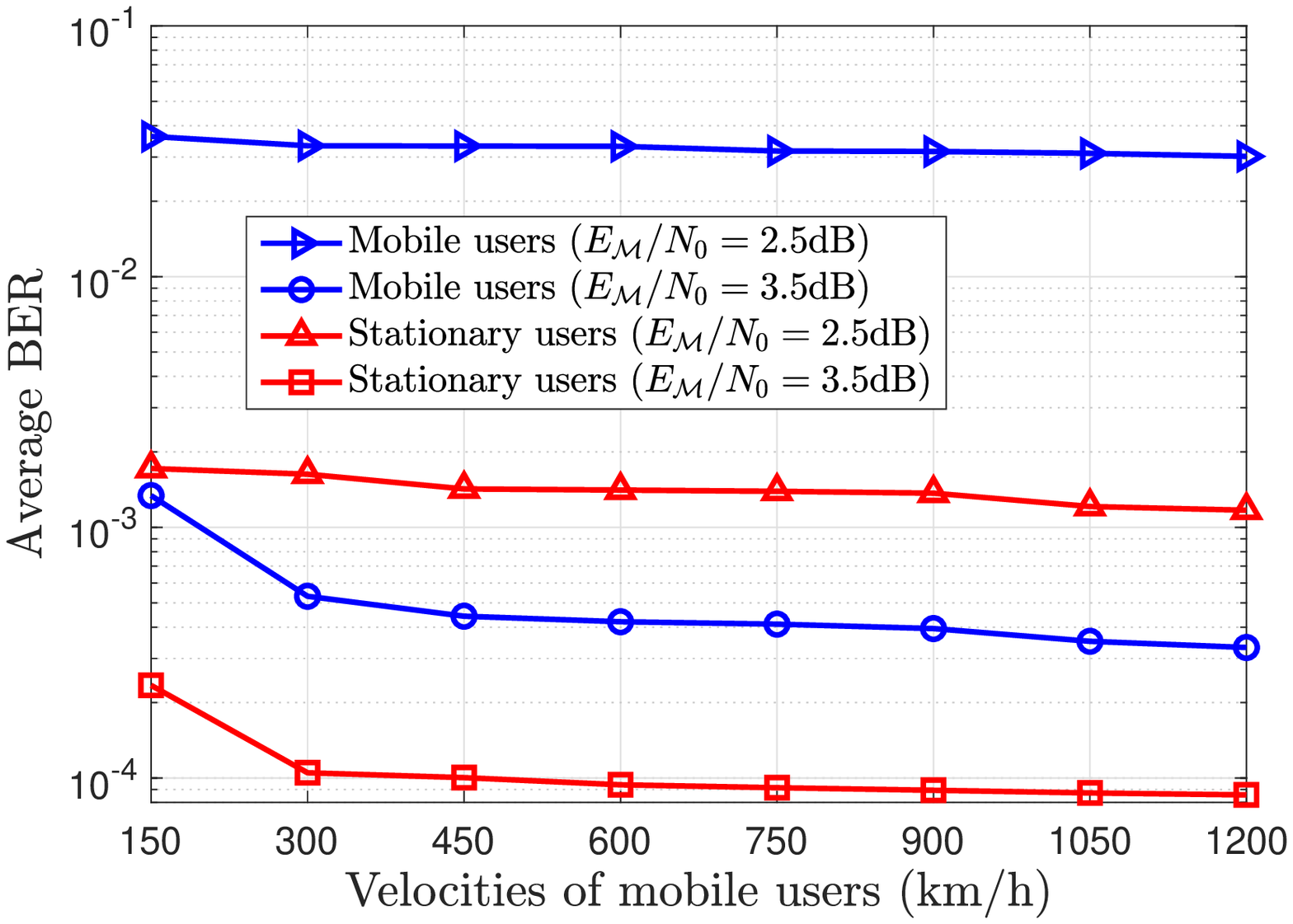}
\caption{Average BER performance of OBNOMA users at different mobile velocities with ${E_{\cal S}}/{E_{\cal M}}=5$ dB.}\label{Speed}
\end{minipage}
\hfill
\begin{minipage}[t]{0.49\textwidth}
\centering
\includegraphics[width=3.3in]{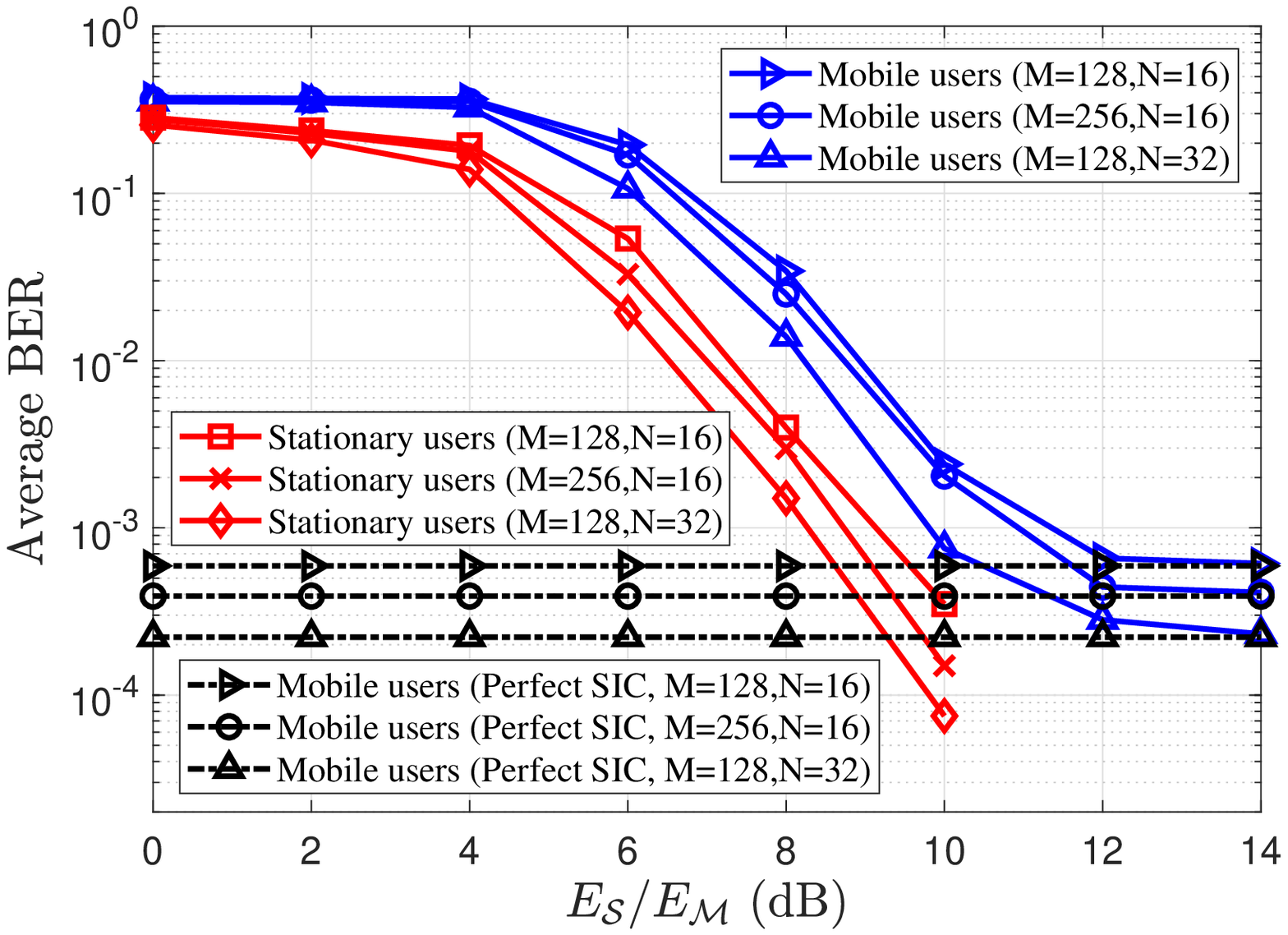}
\caption{Average BER performance of OBNOMA users utilizing 16-QAM symbols with ${E_{\cal M}}/{N_0}=7.5$ dB.}\label{16QAM}
\end{minipage}
\end{figure}

In terms of complexity reduction, Fig. \ref{fig:receiver_R} shows the average
BER performance of OBNOMA system 
for receivers utilizing the proposed reduced complexity detectors. 
In Fig. \ref{fig:receiver_R_s}, we observe that the R-OAMP-LMMSE detector 
achieves similar performance to that of OAMP-LMMSE detector for stationary users without costly matrix inverses, and achieves better performance than traditional MP detector. 
The results in Fig. \ref{fig:receiver_R_m} demonstrates a graceful
performance degradation for mobile users as the algorithm complexity
drops with smaller $R$. The results also reveal that as $R$ increases, the performance of R-GAMP-EP detector 
would asymptotically approach to that of full GAMP-EP detector and even better than that of traditional MP detector for mobile users. We also notice that the value of $R$ has slightly effect on the performance of stationary users.
Therefore, our proposed R-OAMP-LMMSE and R-GAMP-EP detectors can yield 
attractive compromise between receiver performance and complexity.
\begin{figure}[ht]
\begin{subfigure}{.5\textwidth}
  \centering
  % include first image
  \includegraphics[width=1\linewidth]{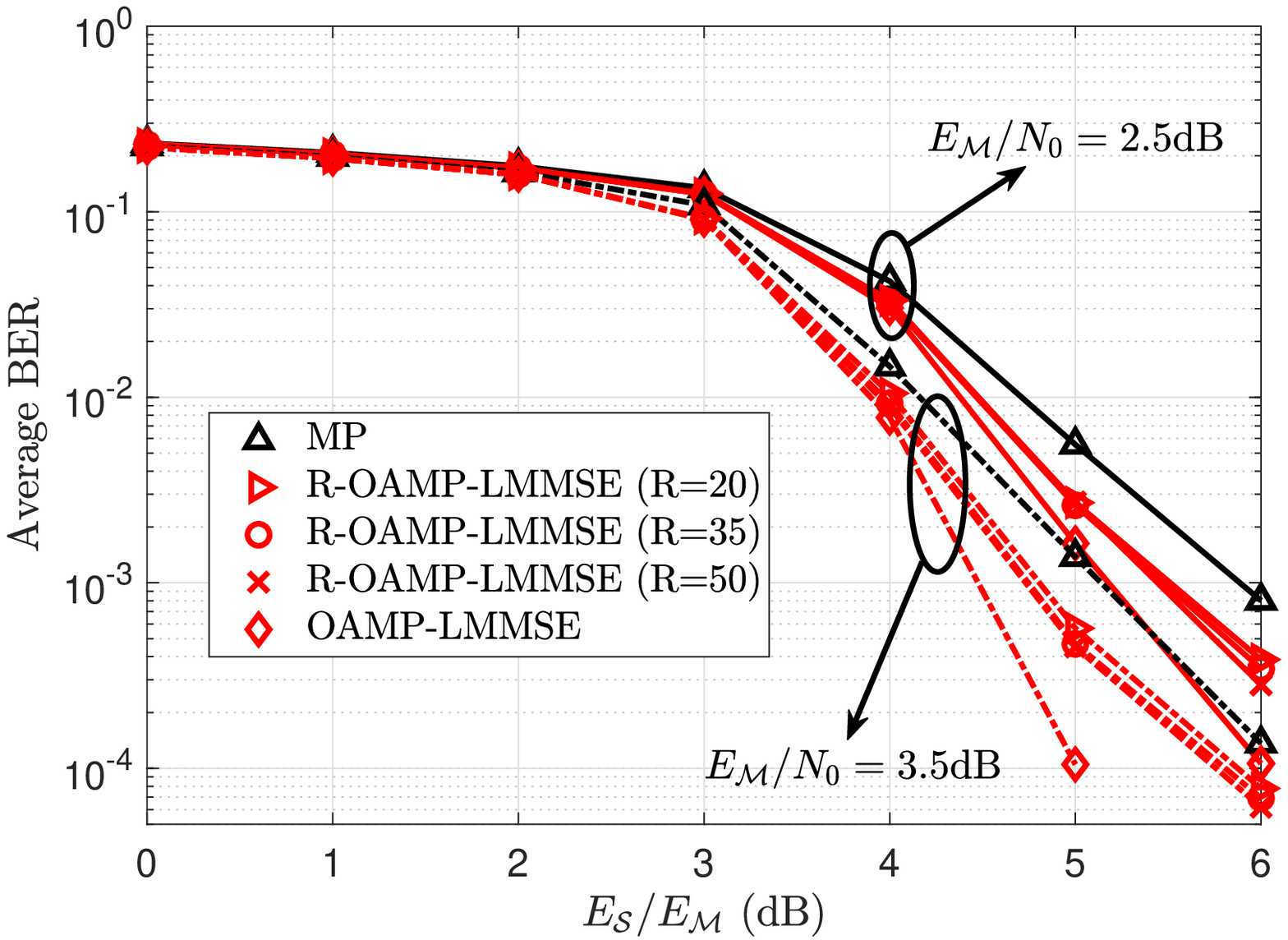}
  \caption{Average BER performance for stationary users.}
  \label{fig:receiver_R_s}
\end{subfigure}
\begin{subfigure}{.5\textwidth}
  \centering
  % include second image
  \includegraphics[width=1\linewidth]{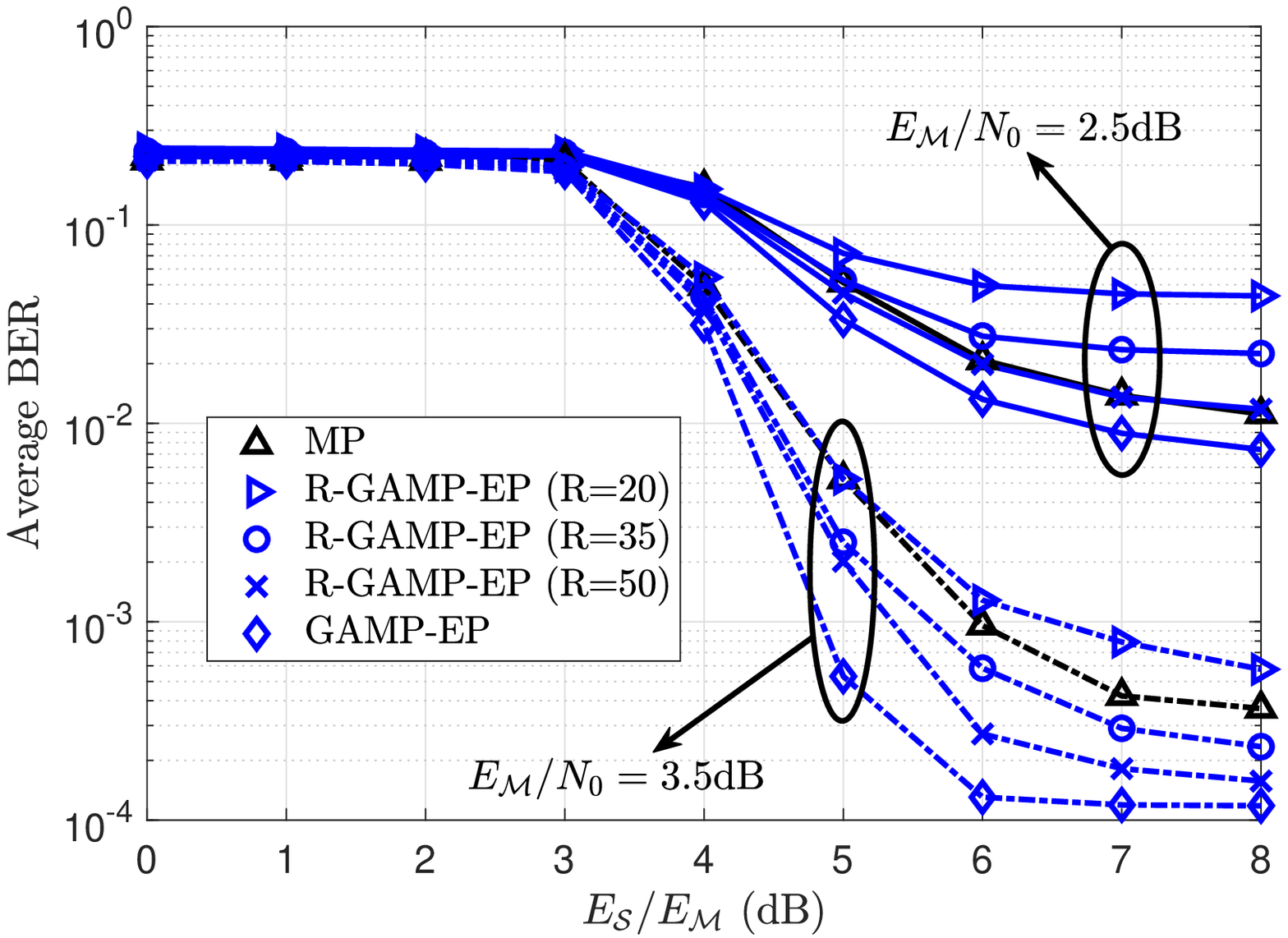}
  \caption{Average BER performance for mobile users.}
  \label{fig:receiver_R_m}
\end{subfigure}
\caption{Average BER performance of OBNOMA with reduced complexity detectors.}
\label{fig:receiver_R}
\end{figure}

Finally, the performance of the proposed iterative SIC turbo receiver are tested for imperfect CSI. Here, we characterize the channel uncertainties under norm-bounded CSI estimation errors, which can be modeled as
\begin{align*}
{h_{x,i}} &= {{\hat h}_{x,i}} + \Delta {h_{x,i}},\ \left\| {\Delta {h_{x,i}}} \right\| \le {\epsilon _{{h_{x,i}}}},\nonumber\\
{\tau _{x,i}} &= {{\hat \tau }_{x,i}} + \Delta {\tau _{x,i}},\ \left\| {\Delta {\tau _{x,i}}} \right\| \le {\epsilon _{{\tau _{x,i}}}},\nonumber\\
\nu _{{{\mathcal{M}_v}},i}&= {{\hat \nu }_{{{\mathcal{M}_v}},i}} + \Delta {\nu _{{{\mathcal{M}_v}},i}},\ \left\| {\Delta {\nu _{{{\mathcal{M}_v}},i}}} \right\| \le {\epsilon _{{\nu _{{{\mathcal{M}_v}},i}}}},\nonumber
\end{align*}
where $\forall x \in \left\{ {{\mathcal{S}_u},{\mathcal{M}_v}} \right\}$, ${{\hat h}_{x,i}}$, ${{\hat \tau }_{x,i}}$ and ${{\hat \nu }_{{{\mathcal{M}_v}},i}}$ denote the estimated values of ${h_{x,i}}$, ${\tau _{x,i}}$ and $\nu _{{{\mathcal{M}_v}},i}$. The corresponding channel estimation errors $\Delta {h_{x,i}}$, $\Delta {\tau _{x,i}}$ and $\Delta {\nu _{{{\mathcal{M}_v}},i}}$ are bounded in their norms. The model specifies the respective norm bounds of ${\epsilon _{{h_{x,i}}}}$, ${\epsilon _{{\tau _{x,i}}}}$ and ${\epsilon _{{\nu _{{{\mathcal{M}_v}},i}}}}$, respectively. For brevity, we assume that ${\epsilon _{{h_{x,i}}}} = \epsilon \left\| {{{\hat h}_{x,i}}} \right\|$,
${\epsilon _{{\tau _{x,i}}}} = \epsilon \left\| {{{\hat \tau }_{x,i}}}
 \right\|$ and ${\epsilon _{{\nu _{{{\mathcal{M}_v}},i}}}} = \epsilon \left\|
 {{\hat \nu }_{{{\mathcal{M}_v}},i}} \right\|,\forall u,v,i$.

From the results in Fig. \ref{CSI} under ${E_{\cal S}}/{E_{\cal M}}=5$ dB, we observe that the performance loss of our proposed schemes is mild for the modest values of channel uncertainty $\epsilon$. The performance of mobile users is more sensitive to the CSI uncertainty than the stationary users. 
The graceful degradation of receiver 
performance with increasing amount channel uncertainty demonstrate
the robustness of our proposed OBNOMA framework and
the iterative SIC turbo receiver against channel modeling errors.
\begin{figure}
  \centering
  \includegraphics[width=3.4in]{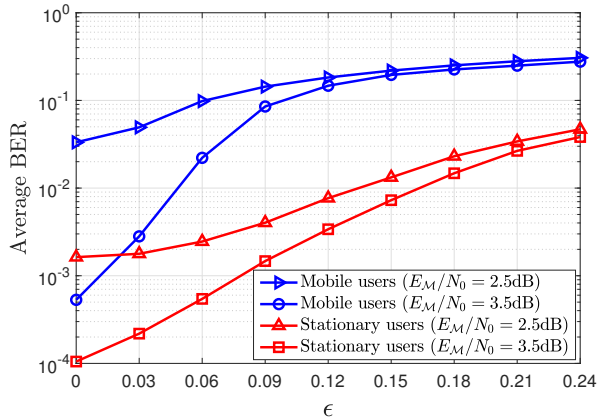}
  \caption{Average BER performance of OBNOMA users with imperfect CSI under ${E_{\cal S}}/{E_{\cal M}}=5$ dB.}\label{CSI}
\end{figure}

\section{Conclusion}\label{VII_Conclusion}
In this paper, we proposed a novel coded uplink multi-user system to achieve high spectrum efficiency
through NOMA. 
Our proposed OBNOMA framework groups
users with different mobility profiles for channel sharing.  
Based on the recently developed OTFS technology, 
we allocated the sub-vector resources of Doppler and delay dimensions 
respectively to the stationary and mobile users.
We developed an iterative SIC turbo receiver for 
effective multi-user detection and decoding under strong CCI of OBNOMA system.
We derived two detectors for different user mobility profiles
within the turbo receiver. We also proposed reduced complexity variants for both the detector algorithms without significant performance drop. Our EXIT chart analysis 
further verified the rapid convergence of the proposed 
receivers. 
Our results demonstrated the feasibility of OBNOMA as well as
strong performance and robustness of the proposed turbo receiver
against channel uncertainty and errors from imperfect SIC.

% if have a single appendix:
%\appendix[Proof of the Zonklar Equations]
% or
\appendix % for no appendix heading
% do not use \section anymore after \appendix, only \section*
% is possibly needed
From (\ref{Static_H}), we have
\begin{align}
H_k[\ell,m]&=\sum_{r=0}^{M-1}\Lambda_k^H[\ell,\ell] F_M^H[\ell,r] \bar{H}_k[r,r]F_M[r,m]\Lambda_k[m,m]\notag \\
&=\frac{1}{M}\sum_{r=0}^{M-1} e^{j\frac{2\pi \ell k}{MN}} e^{j\frac{2\pi \ell r}{M}} H[k+rN]e^{-j\frac{2\pi rm}{M}}e^{-j\frac{2\pi km}{MN}}\notag \\
&=\frac{1}{M}\sum_{r=0}^{M-1} H[k+rN] e^{j\frac{2\pi (\ell-m) k}{MN}}e^{j\frac{2\pi (\ell-m) r}{M}}\notag \\
&=\frac{1}{M}\sum_{r=0}^{M-1} \sum_{p=0}^{P-1}h[p]e^{-j\frac{2\pi (k+rN)p}{MN}} e^{j\frac{2\pi (\ell-m) k}{MN}}e^{j\frac{2\pi (\ell-m) r}{M}}\notag \\
&=\frac{1}{M}\sum_{p=0}^{P-1}h[p]e^{j\frac{2\pi (\ell-m-p) k}{MN}} \sum_{r=0}^{M-1}e^{j\frac{2\pi (\ell-m-p) r}{M}}\notag \\
&=\sum_{p=0}^{P-1}h[p]e^{j\frac{2\pi (\ell-m-p) k}{MN}}\delta([\ell-m-p]_M).
\end{align}

Thus, using the definition of $\bar \gamma (k,\ell ,p)$ in (\ref{phase_static}), we have the following relationship from (\ref{relate_Static_equvalient})
\begin{align}
Y[\ell,k]&=\sum_{m=0}^{M-1}H_k[\ell,m]X[m,k]+\omega[\ell,k]\notag \\
&=\sum_{m=0}^{M-1}\sum_{p=0}^{P-1}h[p]e^{j\frac{2\pi (\ell-m-p) k}{MN}}\delta([\ell-m-p]_M)X[m,k]+\omega[\ell,k]\notag\\
&=\sum_{p=0}^{P-1}h[p]e^{j\frac{2\pi (\ell-[\ell-p]_M-p) k}{MN}}X[[\ell-p]_M,k]+\omega[\ell,k]\notag\\
&=\sum_{p=0}^{P-1}h[p]\bar \gamma (k,\ell ,p)X[[\ell-p]_M,k]+\omega[\ell,k],
\end{align}
which completes the proof.

% use appendices with more than one appendix
% then use \section to start each appendix
% you must declare a \section before using any
% \subsection or using \label (\appendices by itself
% starts a section numbered zero.)
%

%\appendices
%\section{Proof of the First Zonklar Equation}
%Appendix one text goes here.

% you can choose not to have a title for an appendix
% if you want by leaving the argument blank
%\section{}
%Appendix one text goes here.

% use section* for acknowledgement
%\section*{Acknowledgment}

%The authors would like to thank...

% Can use something like this to put references on a page
% by themselves when using endfloat and the captionsoff option.
\ifCLASSOPTIONcaptionsoff
  \newpage
\fi

% trigger a \newpage just before the given reference
% number - used to balance the columns on the last page
% adjust value as needed - may need to be readjusted if
% the document is modified later
%\IEEEtriggeratref{8}
% The "triggered" command can be changed if desired:
%\IEEEtriggercmd{\enlargethispage{-5in}}

% references section

% can use a bibliography generated by BibTeX as a .bbl file
% BibTeX documentation can be easily obtained at:
% http://www.ctan.org/tex-archive/biblio/bibtex/contrib/doc/
% The IEEEtran BibTeX style support page is at:
% http://www.michaelshell.org/tex/ieeetran/bibtex/
%\bibliographystyle{IEEEtranTCOM}
% argument is your BibTeX string definitions and bibliography database(s)
%\bibliography{IEEEabrv,../bib/paper}
%
% <OR> manually copy in the resultant .bbl file
% set second argument of \begin to the number of references
% (used to reserve space for the reference number labels box)
%

%\begin{thebibliography}{1}

%\bibitem{IEEEhowto:kopka}
%H.~Kopka and P.~W. Daly, \emph{A Guide to \LaTeX}, 3rd~ed.\hskip 1em plus
%  0.5em minus 0.4em\relax Harlow, England: Addison-Wesley, 1999.

%\end{thebibliography}

\bibliographystyle{IEEEtran}
%\small
\footnotesize
\bibliography{ref_NOMA_OTFS}

% Generated by IEEEtran.bst, version: 1.14 (2015/08/26)
\begin{thebibliography}{10}
\providecommand{\url}[1]{#1}
\csname url@samestyle\endcsname
\providecommand{\newblock}{\relax}
\providecommand{\bibinfo}[2]{#2}
\providecommand{\BIBentrySTDinterwordspacing}{\spaceskip=0pt\relax}
\providecommand{\BIBentryALTinterwordstretchfactor}{4}
\providecommand{\BIBentryALTinterwordspacing}{\spaceskip=\fontdimen2\font plus
\BIBentryALTinterwordstretchfactor\fontdimen3\font minus
  \fontdimen4\font\relax}
\providecommand{\BIBforeignlanguage}[2]{{%
\expandafter\ifx\csname l@#1\endcsname\relax
\typeout{** WARNING: IEEEtran.bst: No hyphenation pattern has been}%
\typeout{** loaded for the language `#1'. Using the pattern for}%
\typeout{** the default language instead.}%
\else
\language=\csname l@#1\endcsname
\fi
#2}}
\providecommand{\BIBdecl}{\relax}
\BIBdecl

\bibitem{ding2017survey}
Z.~Ding, X.~Lei, G.~K. Karagiannidis, R.~Schober, J.~Yuan, and V.~K. Bhargava,
  ``A survey on non-orthogonal multiple access for {5G} networks: {R}esearch
  challenges and future trends,'' \emph{IEEE J. Sel. Areas Commun.}, vol.~35,
  no.~10, pp. 2181--2195, Oct. 2017.

\bibitem{ding2017application}
Z.~Ding, Y.~Liu, J.~Choi, Q.~Sun, M.~Elkashlan, I.~Chih-Lin, and H.~V. Poor,
  ``Application of non-orthogonal multiple access in {LTE} and {5G} networks,''
  \emph{IEEE Commun. Mag.}, vol.~55, no.~2, pp. 185--191, Feb. 2017.

\bibitem{dai2015non}
L.~Dai, B.~Wang, Y.~Yuan, S.~Han, I.~Chih-Lin, and Z.~Wang, ``Non-orthogonal
  multiple access for {5G}: {S}olutions, challenges, opportunities, and future
  research trends,'' \emph{IEEE Commun. Mag.}, vol.~53, no.~9, pp. 74--81, Sep.
  2015.

\bibitem{islam2016power}
S.~R. Islam, N.~Avazov, O.~A. Dobre, and K.-S. Kwak, ``Power-domain
  non-orthogonal multiple access ({NOMA}) in {5G} systems: {P}otentials and
  challenges,'' \emph{IEEE Commun. Surveys Tuts.}, vol.~19, no.~2, pp.
  721--742, 2nd Quart. 2017.

\bibitem{sharma2019joint}
S.~Sharma, K.~Deka, V.~Bhatia, and A.~Gupta, ``Joint power-domain and
  {SCMA}-based {NOMA} system for downlink in {5G} and beyond,'' \emph{IEEE
  Commun. Lett.}, vol.~23, no.~6, pp. 971--974, Jun. 2019.

\bibitem{hadani2017orthogonal}
R.~Hadani, S.~Rakib, M.~Tsatsanis, A.~Monk, A.~J. Goldsmith, A.~F. Molisch, and
  R.~Calderbank, ``Orthogonal time frequency space modulation,'' in \emph{Proc.
  IEEE Wireless Commun. Netw. Conf. (WCNC)}, San Francisco, CA, USA, Mar. 2017,
  pp. 1--6.

\bibitem{shen2019channel}
W.~Shen, L.~Dai, J.~An, P.~Fan, and R.~W. Heath, ``Channel estimation for
  orthogonal time frequency space ({OTFS}) massive {MIMO},'' \emph{IEEE Trans.
  Signal Process.}, vol.~67, no.~16, pp. 4204--4217, Aug. 2019.

\bibitem{raviteja2019embedded}
P.~Raviteja, K.~T. Phan, and Y.~Hong, ``Embedded pilot-aided channel estimation
  for {OTFS} in delay--{D}oppler channels,'' \emph{IEEE Trans. Veh. Tech.},
  vol.~68, no.~5, pp. 4906--4917, May 2019.

\bibitem{9110823}
Y.~{Liu}, S.~{Zhang}, F.~{Gao}, J.~{Ma}, and X.~{Wang}, ``Uplink-aided high
  mobility downlink channel estimation over massive {MIMO}-{OTFS} system,''
  \emph{IEEE J. Sel. Areas Commun.}, vol.~38, no.~9, pp. 1994--2009, Sep. 2020.

\bibitem{surabhi2019low}
G.~Surabhi and A.~Chockalingam, ``Low-complexity linear equalization for {OTFS}
  modulation,'' \emph{IEEE Commun. Lett.}, vol.~24, no.~2, pp. 330--334, Feb.
  2020.

\bibitem{murali2018otfs}
K.~Murali and A.~Chockalingam, ``On {OTFS} modulation for high-{D}oppler fading
  channels,'' in \emph{Proc. Inform. Theory and Applications Workshop (ITA)},
  San Diego, CA, Feb. 2018, pp. 1--10.

\bibitem{tiwari2019low}
S.~Tiwari, S.~S. Das, and V.~Rangamgari, ``Low complexity {LMMSE} receiver for
  {OTFS},'' \emph{IEEE Commun. Lett.}, vol.~23, no.~12, pp. 2205--2209, Dec.
  2019.

\bibitem{raviteja2018interference}
P.~Raviteja, K.~T. Phan, Y.~Hong, and E.~Viterbo, ``Interference cancellation
  and iterative detection for orthogonal time frequency space modulation,''
  \emph{IEEE Trans. Wireless Commun.}, vol.~17, no.~10, pp. 6501--6515, Oct.
  2018.

\bibitem{yuan2020simple}
W.~{Yuan}, Z.~{Wei}, J.~{Yuan}, and D.~W.~K. {Ng}, ``A simple variational
  {B}ayes detector for orthogonal time frequency space ({OTFS}) modulation,''
  \emph{IEEE Trans. Veh. Technol.}, vol.~69, no.~7, pp. 7976--7980, Jul. 2020.

\bibitem{raviteja2019otfs}
P.~Raviteja, E.~Viterbo, and Y.~Hong, ``{OTFS} performance on static multipath
  channels,'' \emph{IEEE Wireless Commun. Lett.}, vol.~8, no.~3, pp. 745--748,
  Jun. 2019.

\bibitem{surabhi2019diversity}
G.~Surabhi, R.~M. Augustine, and A.~Chockalingam, ``On the diversity of uncoded
  {OTFS} modulation in doubly-dispersive channels,'' \emph{IEEE Trans. Wireless
  Commun.}, vol.~18, no.~6, pp. 3049--3063, Jun. 2019.

\bibitem{raviteja2019effective}
P.~Raviteja, Y.~Hong, E.~Viterbo, and E.~Biglieri, ``Effective diversity of
  {OTFS} modulation,'' \emph{IEEE Wireless Commun. Lett.}, vol.~9, no.~2, pp.
  249--253, Feb. 2020.

\bibitem{ramachandran2018mimo}
M.~K. Ramachandran and A.~Chockalingam, ``{MIMO}-{OTFS} in high-{D}oppler
  fading channels: Signal detection and channel estimation,'' in \emph{Proc.
  IEEE Global Commun. Conf. (GLOBECOM)}, Abu Dhabi, UAE, Dec. 2018, pp.
  206--212.

\bibitem{surabhi2019otfs}
G.~Surabhi, M.~K. Ramachandran, and A.~Chockalingam, ``{OTFS} modulation with
  phase noise in mm{W}ave communications,'' in \emph{Proc. IEEE 89th Veh. Tech.
  Conf. (VTC2019-Spring)}, Apr. 2019, pp. 1--5.

\bibitem{khammammetti2018otfs}
V.~Khammammetti and S.~K. Mohammed, ``{OTFS}-based multiple-access in high
  {D}oppler and delay spread wireless channels,'' \emph{IEEE Wireless Commun.
  Lett.}, vol.~8, no.~2, pp. 528--531, Apr. 2019.

\bibitem{augustine2019interleaved}
R.~M. Augustine and A.~Chockalingam, ``Interleaved time-frequency multiple
  access using {OTFS} modulation,'' in \emph{Proc. IEEE 90th Veh. Tech. Conf.
  (VTC2019-Fall)}, Honolulu, HI, USA, Sep. 2019, pp. 1--5.

\bibitem{surabhi2019multiple}
G.~Surabhi, R.~M. Augustine, and A.~Chockalingam, ``Multiple access in the
  delay-doppler domain using {OTFS} modulation,'' \emph{arXiv preprint
  arXiv:1902.03415}, 2019.

\bibitem{li2020new}
M.~Li, S.~Zhang, F.~Gao, P.~Fan, and O.~A. Dobre, ``A new path division
  multiple access for the massive {MIMO}-{OTFS} networks,'' \emph{arXiv
  preprint arXiv:2003.08228}, 2020.

\bibitem{chatterjee2020non}
A.~Chatterjee, V.~Rangamgari, S.~Tiwari, and S.~S. Das, ``Non orthogonal
  multiple access with orthogonal time frequency space signal transmission,''
  \emph{arXiv preprint arXiv:2003.06387}, 2020.

\bibitem{deka2020otfs}
K.~Deka, A.~Thomas, and S.~Sharma, ``{OTFS}-{NOMA} based on {SCMA},''
  \emph{arXiv preprint arXiv:2005.03216}, 2020.

\bibitem{matz2013time}
G.~Matz, H.~Bolcskei, and F.~Hlawatsch, ``Time-frequency foundations of
  communications: Concepts and tools,'' \emph{IEEE Signal Process. Mag.},
  vol.~30, no.~6, pp. 87--96, Nov. 2013.

\bibitem{ding2019otfs}
Z.~Ding, R.~Schober, P.~Fan, and H.~V. Poor, ``{OTFS}-{NOMA}: An efficient
  approach for exploiting heterogenous user mobility profiles,'' \emph{IEEE
  Trans. Commun.}, vol.~67, no.~11, pp. 7950--7965, Nov. 2019.

\bibitem{ding2019robust}
Z.~Ding, ``Robust beamforming design for {OTFS}-{NOMA},'' \emph{IEEE Open
  Journal of the Communications Society}, vol.~1, pp. 33--40, 2020.

\bibitem{raviteja2018practical}
P.~Raviteja, Y.~Hong, E.~Viterbo, and E.~Biglieri, ``Practical pulse-shaping
  waveforms for reduced-cyclic-prefix {OTFS},'' \emph{IEEE Trans. Veh. Tech.},
  vol.~68, no.~1, pp. 957--961, Jan. 2019.

\bibitem{ge2020}
Y.~Ge, Q.~Deng, P.~C. Ching, and Z.~Ding, ``Receiver design for {OTFS} with
  fractionally spaced sampling approach,'' \emph{accepted by IEEE Trans.
  Wireless Commun.}, [Online]. Available: https://arxiv.org/abs/2009.00806.

\bibitem{zhang2007asymmetric}
J.~Zhang, A.~D.~S. Jayalath, and Y.~Chen, ``Asymmetric {OFDM} systems based on
  layered {FFT} structure,'' \emph{IEEE Signal Process. Lett.}, vol.~14,
  no.~11, pp. 812--815, Nov. 2007.

\bibitem{xia2001precoded}
X.-G. Xia, ``Precoded and vector {OFDM} robust to channel spectral nulls and
  with reduced cyclic prefix length in single transmit antenna systems,''
  \emph{IEEE Trans. Commun.}, vol.~49, no.~8, pp. 1363--1374, Aug. 2001.

\bibitem{li2012performance}
Y.~Li, I.~Ngebani, X.-G. Xia, and A.~Host-Madsen, ``On performance of vector
  {OFDM} with linear receivers,'' \emph{IEEE Trans. Signal Process.}, vol.~60,
  no.~10, pp. 5268--5280, Oct. 2012.

\bibitem{tuchler2011turbo}
M.~Tuchler and A.~C. Singer, ``Turbo equalization: An overview,'' \emph{IEEE
  Trans. Inf. Theory}, vol.~57, no.~2, pp. 920--952, Feb. 2011.

\bibitem{douillard1995}
C.~Douillard \emph{et~al.}, ``Iterative correction of intersymbol interference:
  Turbo-equalization,'' \emph{Eur. Trans. Telecommun.}, vol.~6, no.~5, pp.
  507--511, Sep.-Oct. 1995.

\bibitem{wang1999iterative}
X.~Wang and H.~V. Poor, ``Iterative (turbo) soft interference cancellation and
  decoding for coded {CDMA},'' \emph{IEEE Trans. Commun.}, vol.~47, no.~7, pp.
  1046--1061, Jul. 1999.

\bibitem{li2005exit}
K.~Li and X.~Wang, ``{EXIT} chart analysis of turbo multiuser detection,''
  \emph{IEEE Trans. Wireless Commun.}, vol.~4, no.~1, pp. 300--311, Jan. 2005.

\bibitem{ma2019orthogonal}
J.~Ma, L.~Liu, X.~Yuan, and L.~Ping, ``On orthogonal {AMP} in coded linear
  vector systems,'' \emph{IEEE Trans. Wireless Commun.}, vol.~18, no.~12, pp.
  5658--5672, Dec. 2019.

\bibitem{kay1993fundamentals}
S.~M. Kay, \emph{Fundamentals of statistical signal processing}.\hskip 1em plus
  0.5em minus 0.4em\relax Prentice Hall PTR, 1993.

\bibitem{loeliger2007factor}
H.-A. Loeliger, J.~Dauwels, J.~Hu, S.~Korl, L.~Ping, and F.~R. Kschischang,
  ``The factor graph approach to model-based signal processing,'' \emph{Proc.
  IEEE}, vol.~95, no.~6, pp. 1295--1322, Jun. 2007.

\bibitem{minka2005divergence}
T.~Minka, ``Divergence measures and message passing,'' Technical report,
  Microsoft Research, Tech. Rep., 2005.

\bibitem{santos2018turbo}
I.~Santos, J.~J. Murillo-Fuentes, E.~Arias-de Reyna, and P.~M. Olmos, ``Turbo
  {EP}-based equalization: A filter-type implementation,'' \emph{IEEE Trans.
  Commun.}, vol.~66, no.~9, pp. 4259--4270, Sep. 2018.

\bibitem{csahin2018iterative}
S.~{\c{S}}ahin, A.~M. Cipriano, C.~Poulliat, and M.-L. Boucheret, ``Iterative
  equalization with decision feedback based on expectation propagation,''
  \emph{IEEE Trans. Commun.}, vol.~66, no.~10, pp. 4473--4487, Oct. 2018.

\bibitem{el2013exit}
M.~El-Hajjar and L.~Hanzo, ``{EXIT} charts for system design and analysis,''
  \emph{IEEE Commun. Surveys Tuts.}, vol.~16, no.~1, pp. 127--153, Feb. 2014.

\bibitem{lou2010soft}
H.~Lou and C.~Xiao, ``Soft-decision feedback turbo equalization for multilevel
  modulations,'' \emph{IEEE Trans. Signal Process.}, vol.~59, no.~1, pp.
  186--195, Jan. 2011.

\bibitem{failli1989digital}
M.~Failli, \emph{Digital Land Mobile Radio Communications. COST 207}.\hskip 1em
  plus 0.5em minus 0.4em\relax European Communities, Luxembourg, 1989.

\bibitem{rangan2019vector}
S.~Rangan, P.~Schniter, and A.~K. Fletcher, ``Vector approximate message
  passing,'' \emph{IEEE Trans. Inf. Theory}, vol.~65, no.~10, pp. 6664--6684,
  Oct. 2019.

\bibitem{hu2005regular}
X.-Y. Hu, E.~Eleftheriou, and D.-M. Arnold, ``Regular and irregular progressive
  edge-growth tanner graphs,'' \emph{IEEE Trans. Inf. Theory}, vol.~51, no.~1,
  pp. 386--398, Jan. 2005.

\bibitem{richardson2008modern}
T.~Richardson and R.~Urbanke, \emph{Modern coding theory}.\hskip 1em plus 0.5em
  minus 0.4em\relax Cambridge university press, 2008.

\end{thebibliography}

% biography section
%
% If you have an EPS/PDF photo (graphicx package needed) extra braces are
% needed around the contents of the optional argument to biography to prevent
% the LaTeX parser from getting confused when it sees the complicated
% \includegraphics command within an optional argument. (You could create
% your own custom macro containing the \includegraphics command to make things
% simpler here.)
%\begin{biography}[{\includegraphics[width=1in,height=1.25in,clip,keepaspectratio]{mshell}}]{Michael Shell}
% or if you just want to reserve a space for a photo:

% You can push biographies down or up by placing
% a \vfill before or after them. The appropriate
% use of \vfill depends on what kind of text is
% on the last page and whether or not the columns
% are being equalized.

%\vfill

% Can be used to pull up biographies so that the bottom of the last one
% is flush with the other column.
%\enlargethispage{-5in}

% that's all folks
\end{document}